\documentclass[11pt]{article}
\pdfoutput=1
\usepackage[left=1in,right=1in,top=.9in,bottom=.9in]{geometry}

\usepackage{amsthm}

\usepackage[pagebackref,colorlinks]{hyperref}
    \hypersetup{linkcolor=blue!50!black, 
        citecolor=black!60, 
        urlcolor=blue!50!black
        }
        
\usepackage{amsmath}
\usepackage{bbm}  %% \mathbbm{1}
\usepackage{mathrsfs}
\usepackage[nameinlink]{cleveref}
    \crefname{ex}{Example}{Examples}
    \crefname{thm}{Theorem}{Theorems} 
    \crefname{lem}{Lemma}{Lemmas}
    \crefname{prop}{Proposition}{Propositions}
    \crefname{cor}{Corollary}{Corollaries} 
    \crefname{conj}{Conjecture}{Conjectures} 
    \crefname{defn}{Definition}{Definitions}
    \crefname{const}{Construction}{Constructions}
    \crefname{rmk}{Remark}{Remarks}
    \crefname{lin}{Line}{Lines} 
    \crefname{prob}{Problem}{Problems}

\theoremstyle{plain}
	\newtheorem{thm}{Theorem}[section]
	\newtheorem{lem}[thm]{Lemma}
	\newtheorem{prop}[thm]{Proposition}
	\newtheorem{cor}[thm]{Corollary}
	
        \newtheorem{cons}[thm]{Construction}
	\newtheorem*{thm*}{Theorem}
	\newtheorem*{cor*}{Corollary}
	\theoremstyle{definition} %no italics
		\newtheorem{defn}[thm]{Definition}
		\newtheorem{ex}[thm]{Example}
    	\newtheorem{rmk}[thm]{Remark}

%%%%%%%%%%%%%%%%%%%%%%%%%%%%%%%%%%%%%%%%%%%%%%%%%%%%%%%%%%%%%%%%%%%%%%
%% ##### GRAPHICS PACKAGE OPTIONS (and color) #####
%%%%%%%%%%%%%%%%%%%%%%%%%%%%%%%%%%%%%%%%%%%%%%%%%%%%%%%%%%%%%%%%%%%%%%
\usepackage{tikz, graphicx, pgfplots}
\usepackage[margin=1.5cm]{caption}
\usepackage{subcaption}
\captionsetup[subfigure]{subrefformat=simple,labelformat=simple, font=small}

\usetikzlibrary{positioning}
\usetikzlibrary{arrows}
\usetikzlibrary{decorations.pathmorphing}
    %%% for reactions 
    \tikzset{%
    fwdrxn/.style={very thick, arrows={-Stealth[length=5pt,width=5pt]}},
    revrxn/.style={very thick, arrows={-Stealth[length=5pt,width=5pt,left]}},
    newt/.style={turq, opacity=0.15}
    }
\usepackage{color, xcolor}

    \definecolor{ratecnst}{RGB}{172,172,172}
	
\usepackage{multirow}
\usepackage{array}

%%%%%%%%%%%%%%%%%%%%%%%%%%%%%%%%%%%%%%%%%%%%%%%%%%%%%%%%%%%%%%%%%%%%%%
%% ##### MATH SHORTCUTS #####
%%%%%%%%%%%%%%%%%%%%%%%%%%%%%%%%%%%%%%%%%%%%%%%%%%%%%%%%%%%%%%%%%%%%%%
\usepackage{amsfonts, amssymb,mathtools}
\usepackage[alphabetic]{amsrefs}
\newcommand{\eq}[1]{\begin{align*}#1\end{align*}}
	\newcommand{\eqn}[1]{\begin{align}#1\end{align}}  
\newcommand{\ds}{\displaystyle}	
  
\newcommand\mc[1]{\mathcal{#1}}

\newcommand\mrm[1]{\mathrm{#1}}
\newcommand{\cc}{\ensuremath{\mathbb{C}}} 
\newcommand{\rr}{\ensuremath{\mathbb{R}}}   
 
\newcommand{\zz}{\ensuremath{\mathbb{Z}}}

\renewcommand{\epsilon}{\varepsilon}	% varepsilon
                       %epsilon << shorter
\renewcommand{\phi}{\varphi}			% varphi

\newcommand{\vv}[1]{{\boldsymbol{#1}}}  % vector 
\newcommand{\mm}[1]{\mathbf{#1}}               % matrix
\newcommand{\rrp}{\rr_{\geq 0}}
\newcommand{\rrpp}{\rr_{>0}}
\newcommand{\zzp}{\zz_{\geq 0}}
\newcommand{\zzpp}{\zz_{>0}}
\newcommand{\RR}{\ensuremath{\rightleftharpoons}}
\newcommand{\xx}{\vv x}
\newcommand{\yy}{\vv y}
\newcommand{\kk}{{k}}
\newcommand{\ratecnst}[1]{{\footnotesize{\color{gray}{#1}}}}

\usepackage{enumitem}
\usepackage{chemfig} \newcommand{\cf}[1]{\textsf{#1}}

\DeclareMathOperator{\codim}{codim}

\DeclareMathOperator{\spn}{span}
\DeclareMathOperator{\supp}{supp} % support
 % image

\DeclareMathOperator{\row}{row} % row space
\DeclareMathOperator{\Int}{int} % interior
\renewcommand{\top}{{}}
%%%%%%%%%%%%%%%%%%%%%%%%%%%%%%%%%%%%%%%%%%%%%%%%%%%%%%%%%%%%%%%%%%%%%%
%% ##### FOR DRAFTING PURPOSES #####
%%%%%%%%%%%%%%%%%%%%%%%%%%%%%%%%%%%%%%%%%%%%%%%%%%%%%%%%%%%%%%%%%%%%%%

% \include{LineNoPackageOptions} %% line number for drafting purposes

\usepackage{comment}
\usepackage[colorinlistoftodos,prependcaption,
    textsize=footnotesize,  
    linecolor=gray, bordercolor=gray, backgroundcolor=gray!25, 
    textcolor=black]{todonotes}

%%%%%%%%%%%%%%%%%%%%%%%%%%%%%%%%%%%%%%%%%%%%%%%%%%%%%%%%%%%%%%%%%%%%%%
%% ##### TITLE/AUTHOR #####
%%%%%%%%%%%%%%%%%%%%%%%%%%%%%%%%%%%%%%%%%%%%%%%%%%%%%%%%%%%%%%%%%%%%%%
\usepackage{authblk}\usepackage[symbol]{footmisc}

\title{Maximum likelihood estimation of log-affine
models \\ using detailed-balanced reaction networks}

\author[1]{
    Oskar~Henriksson%
}\author[2]{
    Carlos~Am\'endola%
}\author[3]{
    Jose~Israel~Rodriguez%
}\author[4]{
    Polly~Y.~Yu%
}
\affil[1]{\small Department of Mathematical Sciences, University of Copenhagen}
\affil[2]{\small Institute of Mathematics, Technical University of Berlin}
\affil[3]{\small  Department of Mathematics and 
Department of Electrical and Computer Engineering, University~of~Wisconsin -- Madison}
\affil[4]{\small Department of Mathematics, University of Illinois Urbana-Champaign}
\date{}

% comments
%direct edits

\usepackage{colordvi}
\usepackage{color}
\newcommand{\defcolor}[1]{{\color{blue}#1}}
\newcommand{\demph}[1]{\defcolor{{\it #1}}}

\usepackage{nicefrac}
\usepackage{wasysym}%wasysym: \smiley{} \frownie{} 

\numberwithin{equation}{section}

\setlength{\parskip}{.15\baselineskip} % Space between paragraphs
\newcommand{\Xac}{\mc X_{\mm A,\vv c}}
\newcommand{\closedXac}{\overline{\mathcal{X}}_{\mm A,\vv c}}
\newcommand{\Xa}{\mc X_{\mm A}}
\newcommand{\closedXa}{\overline{\mathcal{X}}_{\mm A}}

% Stops LaTeX from breaking inline mathematics
\binoppenalty=99999
\relpenalty=99999
% ...and from using overly wide hboxes
\emergencystretch 3em
% ...and from breaking paragraphs
\widowpenalties 1 1000

\usepackage{booktabs}

\usepackage{framed}% for leftbar
\begin{document}

\maketitle

\vspace{-2.5em}

\begin{abstract}
    
    A fundamental question in the field of molecular computation is what computational tasks a biochemical system can carry out. In this work, we focus on the problem of finding the maximum likelihood estimate (MLE) for log-affine models. We revisit a construction due to Gopalkrishnan of a mass-action system with the MLE as its unique positive steady state, which is based on choosing a basis for the kernel of the design matrix of the model. We extend this construction to allow for any finite spanning set of the kernel, and explore how the choice of spanning set influences the dynamics of the resulting network, including the existence of boundary steady states, the deficiency of the network, and the rate of convergence. In particular, we prove that using a Markov basis as the spanning set guarantees global stability of the MLE steady state.
\end{abstract}

%%%%%%%%%%%%%%%%%%%%%%%%%%%%%%%%%%%%%%%%%%%%%%%%%%%%%%%%%%%%%%%%%%%%%%%%%%%%%%%%%%%%%%%%%%%%%%%%%%
\section{Introduction}
\label{sec:intro}
A recent line of research in synthetic and systems biology is \textit{molecular computation}, which studies the problem of designing biological systems for carrying out given computations, including arithmetic~\cite{AndersonJoshi2024}, evaluation of algebraic functions~\cite{BuismanEikelderHilbersLiekens} or piecewise functions like $\min$ and $\max$~\cite{ChenDotyReevesSoloveichik2023}, approximating statistical distributions~\cite{CappellettiOrtizAndersonWinfree2020}, parameter inference \cite{Gopalkrishnan2016,VirinchiBeheraGopalkrishnan2017,ViswaBeheraGopalkrishnan2018,WiufBeheraSinghGopalkrishnan2023}, and even implementing simple neural networks~\cite{AndersonJoshiDeshpande2021}. A common approach is to construct a chemical reaction network obeying mass-action kinetics, for which the steady state is the solution to the computational problem. Realizing such reaction networks \emph{in vivo} is becoming more feasible with emerging technology, for example by DNA strand replacement techniques~\cite{SoloveichikSeeligWinfree2010,QianSoloveichikWinfree2011,ChenDalchauSrinivasPhillipsCardelliSoloveichikSeelig2013,Srinivas2015thesis,LvLiShiFanWang2021}.

In this work, we focus on the statistical problem of maximum likelihood estimation, for models that are \emph{log-affine}. This includes many important models, for instance graphical models (see \cite{lauritzen1996graphical}) and hierarchical models in contingency tables \cite{hara2012hierarchical}. A log-affine model with $m$ states and $d$ parameters is given by a design matrix $\mm{A}=(\vv a_1,\ldots,\vv a_m)\in\zz^{d\times m}$ and a scaling vector $\vv c\in\rrpp^m$, such that the set of possible probability distributions takes the form 
$$\mc M_{\mm A,\vv c}  = 
    \left\{ 
    (c_1 \vv t^{\vv a_1}, \dots, c_m \vv t^{\vv a_m})
    : 
    \vv t \in \rrpp^d, \,
    c_1 \vv t^{\vv a_1}+ \dots+ c_m \vv t^{\vv a_m}=1 
    \right\},
$$
where we use the notation $\vv t^{\vv b} = t_1^{b_1}\cdots t_d^{b_d}$ for vectors $\vv t=(t_1,\ldots,t_d)$ and $\vv b=(b_1,\ldots,b_d)$. 

The problem of maximum likelihood estimation by a reaction network was first approached by Gopalkrishnan \cite{Gopalkrishnan2016}. Specifically, it gives a mass-action system whose unique positive steady state is the maximum likelihood estimate (MLE), when the initial concentrations are given by the normalized observed distribution. A crucial part of this construction is a choice of $\zz$-basis $\Lambda$ for the right integer kernel $\ker_{\zz}(\mm{A})$. 
In this paper, we extend the construction to allow for any finite (and possibly linearly dependent) set $\Lambda\subseteq\ker_{\zz}(\mm{A})$ that spans $\ker_{\mathbb{R}}(\mm{A})$ as a vector space, and show how such $\Lambda$ can give rise to more well-behaved dynamics.

One of the main questions we address concerns the global stability of the MLE steady state. We prove that the mass-action systems of our construction are always \emph{detailed-balanced}. Such systems are known to be asymptotically stable, and we show in \Cref{thm:markov_basis_implies_no_relevant_boundary_steady_states} that the MLE is guaranteed to be a global attractor if the spanning set $\Lambda$ in the construction is sufficiently large; more precisely, if $\Lambda$ is a \emph{Markov basis}, which is a key concept in algebraic statistics (see, e.g., \cite{Almendrahernández2024} for a review). The key step in the proof is to rule out relevant boundary steady states, and the importance of the Markov basis assumption for this is highlighted in \Cref{ex:boundary_steady_states}. With this, we correct an inaccuracy in Lemma~2 and Theorem~5 of \cite{Gopalkrishnan2016}, which claimed that choosing an integer basis for $\ker_{\zz}(\mm{A})$ suffices for ruling out relevant boundary steady states.

A key geometric feature of log-affine models that is used in the construction and our global stability proof is that $\mc{M}_{\mm A,\vv c}$ can be seen as the intersection of a toric variety with the probability simplex. Toric varieties are central objects in applied algebraic geometry and have a rich combinatorial structure (see, e.g., \cite[Section~8.3]{MichalekSturmfels2021} for an introduction). A fundamental property of toric varieties that we are using, is that they can be realized as zero sets of finite collections of binomials, and this is precisely what Markov bases encode. 

The use of toric varieties in algebraic statistics has a long history; for instance, they have been used for sampling \cite{DiaconisSturmfels1998}, decomposability of graphical models \cite{geiger2006toric}, and maximum-likelihood estimation \cite{FienbergRinaldo2012, AmendolaBlissBurkeGibbonsEtAl2019,AKK2020,DDPS2023}; see also \cite{drton2008lectures} for an overview of other applications. On the reaction networks side, the use of toric geometry dates back to a series of works by Gatermann and coauthors \cite{GatermannHuber2002,GatermannEiswirthSensse2005,GatermannWolfrum2005}, and has more recently centered around the concept of toric dynamical systems \cite{CDSS09,BCS2022,CraciunJinYu2020,CJS2025} (see also the overview in \cite[Section~2]{FS2025}), conditions for monomial parametrizability (see, e.g., \cite{MillanDickensteinShiuConradi2012,MullerRegensburger2012,MillanDickenstein2018,ConradiIosifKahle2019,FH24}), and reconstruction of networks from single-cell data~\cite{WangLinSontagSorger2019}. The many toric analogies between reaction network theory and algebraic statistics have previously been discussed in \cite{CDSS09}, and will also be highlighted in the present paper.

Our paper is structured as follows. 
We recall some of the key relevant concepts from algebraic statistics and reaction network theory in \Cref{sec:alg-stat,sec:RN-intro} respectively, in order to fix notation and make the paper accessible to a broader audience. 
In \Cref{sec:main}, we present our construction of a reaction network based on a spanning set for the kernel of the design matrix, and prove that its unique positive steady state is the MLE of the corresponding log-affine model and data. \Cref{sec:choosing_spanning_set} explores the impact the choice of spanning set has on the properties of the system, including global stability, rate of convergence, and the deficiency of the reaction network. We end with a discussion and some outlooks in \Cref{sec:discussion}.

\subsection*{Notation}
In what follows, $\rrp$ denotes the set of nonnegative reals, $\rrpp$ the set of positive real numbers, and similarly $\zzp$, $\zzpp$ denote the sets of nonnegative and positive integers, respectively. We also extend the above notations to vectors,  and let $\rrp^m$ denote the nonnegative orthant and $\rrpp^m$ the positive orthant. We write $\partial\rrp^m=\rrp^m\setminus\rrpp^m$ for the boundary of the nonnegative orthant. For $n\in\zzpp$, we use the notation $[n]=\{1,\ldots,n\}$.
For a vector $\xx \in \rr^m$, we write $\xx^+$ for its nonnegative part, and $\xx^-$ for the nonpositive part; in particular, $\xx = \xx^+-\xx^-$.
For a matrix $\mm{A}$, we write $\ker_\zz(\mm{A})$ to denote the right kernel over the integers, and $\ker(\mm{A})$ to denote the right kernel over the real numbers. 
We will also make extensive use of several componentwise operations:
\begin{itemize}\itemsep -3pt
    \item multiplication: $\xx \circ \yy = (x_1y_1,x_2y_2, \ldots, x_my_m)^\top$ where $\xx \in \rr^m$, $\yy \in \rr^m$, 
    \item exponentiation: $\xx^{\vv y} = x_1^{y_1}x_2^{y_2}\cdots x_m^{y_m}$ where $\xx \in \rrp^m$, $\yy \in \rr^m$, and by convention $0^0 = 1$, 
    \item  ${\vv t}^{\mm A} = ({\vv t}^{\vv a_1}, {\vv t}^{\vv a_2}, \ldots, {\vv t}^{\vv a_m})^\top$, where ${\vv t} \in \rrp^d$ and $\mm A = \begin{pmatrix} \vv a_1,\vv a_2,\ldots, \vv a_m\end{pmatrix} \in \rr^{d \times m}$, 
    \item $\exp(\xx) = (e^{x_1},e^{x_2},\ldots, e^{x_m})^\top$ where $\xx \in \rr^m$, 
    \item $\log(\xx) = (\log x_1,\log x_2,\ldots, \log x_m)^\top$ where $\xx \in \rrpp^m$. 
\end{itemize}

%%%%%%%%%%%%%%%%%%%%%%%%%%%%%%%%%%%%%%%%%%%%%%%%%%%%%%%
%%%%%%%%%%%%%%%%%%%%%%%%%%%%%%%%%%%%%%%%%%%%%%%%%%%%%%%
\section{Algebraic statistics: log-affine models and Birch's Theorem }
\label{sec:alg-stat}

Algebraic statistics uses tools from combinatorics, commutative algebra, invariant theory, and algebraic geometry to study problems arising from probability and statistics. In this section we review important concepts from algebraic statistics like log-affine models, Markov bases, and Birch's Theorem. For a textbook reference, see~\cite[Chapters~6--7]{Sul18}.

A probability distribution on a finite state space of size $m \in \zzpp$ is determined by probabilities  $p_j \geq 0$ for each $j=1,\dots,m$, and can hence be seen as an element of the $(m-1)$-dimensional \demph{probability simplex}
\eq{ 
    \Delta_{m-1} \coloneqq \left\{ (p_1,p_2,\ldots, p_m) \in \rrp^m : \sum_{i=1}^m p_i = 1  \right\}. 
}
A subset of the probability simplex is called an \demph{algebraic statistical model} when it is parameterized by a polynomial map or when an implicit description is given by polynomials.
The focus of this paper is a particularly well-studied class of algebraic statistical models known as \emph{log-affine models}; see, e.g.,  \cite{haber1985maximum,fienberg2007three,FienbergRinaldo2012}, and \cite[Section~6.2]{Sul18}.

\begin{defn}
\label{def:log-affine-model}
    Given a \demph{design matrix}  $\mm A = \begin{pmatrix} \vv a_1, \vv a_2,\ldots, \vv a_m\end{pmatrix} \in \zz^{d \times m}$ that is assumed to contain the vector $\mathbbm{1} \coloneqq (1, 1, \dots, 1)$ in its rowspan, and $\vv c \in \rrpp^m$, the associated \demph{log-affine model}  is 
    \eq{ 
        \mc M_{\mm A,\vv c}  \coloneqq  \{ \vv{p} \in \Int(\Delta_{m-1}) : \log \vv{p} \in \log \vv c + \row({\mm A}) \} . 
    }
    In the case when $\vv c = \mathbbm{1}$, the model $\mc M_{\mm A, \mathbbm{1}}$, denoted $\mc M_{\mm A}$, is said to be \demph{log-linear}. 
\end{defn}

To see that a log-affine model is an algebraic statistical model, consider the following description of $\mc M_{\mm A,\vv c}$ via the monomial map
\begin{equation}\label{eq:parametrization_of_log_affine_set}
    \phi_{\mm{A},\vv{c}}\colon\rrpp^d\to\rrpp^m,\quad \vv{t}\mapsto \vv{c}\circ\vv{t}^{\mm A}=(c_1 \vv t^{\vv a_1}, \dots, c_m \vv t^{\vv a_m})\,.
\end{equation}
Its image is a scaled positive toric variety, which we denote as $\mc X_{\mm A,\vv c}$ (or $\Xa$ when $\vv{c}=\mathbbm{1}$), i.e.,
\begin{equation}\label{eq:define-xac}
    \Xac =\left\{ 
        (c_1 \vv t^{\vv a_1}, \dots, c_m \vv t^{\vv a_m})
        : \vv t \in \rrpp^d 
        \right\}\subseteq\rrpp^m.
\end{equation}
The model $\mc{M}_{\mm A,\vv c}$ is obtained by restricting $\Xac$ to the probability simplex:
\[
    \mc M_{\mm A,\vv c}  = \Xac \cap\Delta_{m-1}=
    \left\{ 
        (c_1 \vv t^{\vv a_1}, \dots, c_m \vv t^{\vv a_m})
        : 
        \vv t \in \rrpp^d, \,
        c_1 \vv t^{\vv a_1}+ \dots+ c_m \vv t^{\vv a_m}=1
    \right\} . 
\]
After logarithmizing, $\Xac$ is an affine set. From this, we easily obtain an implicit description in terms of binomials; for any set $\Lambda\subseteq\ker_{\zz}(\mm{A})$ that spans $\ker(\mm{A})$ as a vector space, it holds that
\begin{equation}\label{eq:xac-implicit}        
         \Xac = 
            \left\{ \xx \in \rrpp^m :  \vv{c}^{\vv{\gamma}^+}\xx^{\vv{\gamma}^-}=\vv{c}^{\vv{\gamma}^-}\xx^{\vv{\gamma}^+}\text{for all}\:\: \vv \gamma\in\Lambda \right\}. 
\end{equation}

Since we want to treat distributions where some entries are zero, we also consider the Euclidean closure of $\Xac$ in $\rrp^m$:
\[
    \closedXac\coloneqq\overline{\left\{ 
        (c_1 \vv t^{\vv a_1}, \dots, c_m \vv t^{\vv a_m})
        : \vv t \in \rrpp^d 
        \right\} }\subseteq\rrp^m.
\]
Obtaining an implicit description of $\closedXac$ is a more subtle task than for $\Xac$, since we cannot logarithmize vectors with zeroes. In particular, the binomials given by a spanning set $\Lambda$ as in \eqref{eq:xac-implicit}  might cut out too large a subset of $\rrp^m$ (see \Cref{ex:log_linear_model} below).
Nevertheless, the Zariski closure of $\mc X_{\mm A,\vv c}$ in $\cc^m$ is a scaled toric variety, so it follows from the theory of toric ideals that there is an implicit description in terms of binomials, given by what is called a Markov basis.

\begin{defn}
\label{def:Markov} 
A \demph{Markov basis} for a design matrix $\mm{A}\in\zz^{d\times m}$ 
is a finite set 
$\Lambda\subseteq\ker_{\zz}(\mm{A})$ such that the binomials 
$\vv{x}^{\vv{\gamma}^+}-\vv{x}^{\vv{\gamma}^-}$ 
for $\vv{\gamma}\in\Lambda$ generate the ideal 
$I(\Xa)\subseteq\rr[x_1,\ldots,x_m]$ 
of all polynomials that vanish on $\Xa$. 
\end{defn}

That such a finite generating set exists for any design matrix follows from Hilbert's Basis Theorem and \cite[Proposition~1.1.9]{cox2011toric}. The concept of Markov bases has a long history in algebraic statistics \cite{Almendrahernández2024}, and there are many software packages available for computing a Markov basis for a given design matrix, for instance \texttt{4ti2} \cite{4ti2}. In what follows, we will need the following well-known lemma, which follows from the definition of a Markov basis and elementary facts about lattice ideals (see, e.g., \cite[Proposition 7.5]{miller2005combinatorial}).

%\begin{lem}
%\label{lem:Markov-basis-closedvariety}
%If $\Lambda\subseteq\ker_{\zz}(\mm A)$ is a Markov basis for $\mm A$, then $\Lambda$ is a spanning set of $\ker(\mm{A})$.
%\end{lem}

\begin{lem}\label{lem:markov_basis_cuts_out_closedXac}
Let $\Lambda\subseteq\ker_{\zz}(\mm{A})$ be a Markov basis for $\mm{A}\in\zz^{d\times m}$. Then $\Lambda$ is a spanning set of $\ker(\mm{A})$, and for any $\vv{c}\in\rrpp^m$ it holds that
\begin{equation*}
    \closedXac =   \left\{ \xx \in \rrp^m :  \vv{c}^{\vv{\gamma}^+}\xx^{\vv{\gamma}^-}=\vv{c}^{\vv{\gamma}^-}\xx^{\vv{\gamma}^+} \text{for all}\:\: \vv \gamma\in\Lambda \right\}.
\end{equation*}    
\end{lem}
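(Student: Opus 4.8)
My plan is to establish the two assertions separately, and to reduce the set equality to the special case $\vv c=\mathbbm 1$ by a coordinatewise scaling.

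First I would show that $\Lambda$ spans $\ker(\mm A)$ by passing to logarithmic coordinates on $\rrpp^m$. Since $\Lambda\subseteq\ker_\zz(\mm A)\subseteq\ker(\mm A)$, only the reverse inclusion is in question. For $\xx\in\rrpp^m$ the relation $\xx^{\vv\gamma^+}=\xx^{\vv\gamma^-}$ is equivalent to $\langle\log\xx,\vv\gamma\rangle=0$, because $\vv\gamma=\vv\gamma^+-\vv\gamma^-$; hence the common positive zero set of the Markov-basis binomials is $\{\xx\in\rrpp^m:\log\xx\in\spn_\rr(\Lambda)^\perp\}$. On the other hand, since these binomials generate $I(\Xa)$ and $\Xa$ is exactly the set of positive real points of $V(I(\Xa))$, this set equals $\Xa$, whose image under $\log$ is $\row(\mm A)=\ker(\mm A)^\perp$. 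As $\log\colon\rrpp^m\to\rr^m$ is a bijection, the two subspaces $\spn_\rr(\Lambda)^\perp$ and $\ker(\mm A)^\perp$ coincide, giving $\spn_\rr(\Lambda)=\ker(\mm A)$.

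For the set equality, I would use that the scaling $D_{\vv c}\colon\rrp^m\to\rrp^m$, $\xx\mapsto\vv c\circ\xx$, is a homeomorphism (with inverse $D_{\vv c^{-1}}$) carrying $\Xa$ onto $\Xac$, and hence $\closedXa$ onto $\closedXac$. Writing $\xx=\vv c\circ\yy$ and using $\xx^{\vv\gamma^\pm}=\vv c^{\vv\gamma^\pm}\yy^{\vv\gamma^\pm}$ (legitimate since $\vv\gamma^+,\vv\gamma^-\in\zzp^m$, so no division by zero occurs), the defining equation $\vv c^{\vv\gamma^+}\xx^{\vv\gamma^-}=\vv c^{\vv\gamma^-}\xx^{\vv\gamma^+}$ collapses to $\yy^{\vv\gamma^+}=\yy^{\vv\gamma^-}$ after cancelling the positive factor $\vv c^{\vv\gamma^+}\vv c^{\vv\gamma^-}$. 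Thus $D_{\vv c}$ also maps the $\vv c=\mathbbm 1$ solution set onto the right-hand side of the claim, and it suffices to prove
\[
  \closedXa=\{\xx\in\rrp^m:\xx^{\vv\gamma^+}=\xx^{\vv\gamma^-}\ \text{for all}\ \vv\gamma\in\Lambda\}=V(I(\Xa))\cap\rrp^m,
\]
where the last equality holds because $\Lambda$ is a Markov basis, so the binomials generate $I(\Xa)$. The inclusion $\closedXa\subseteq V(I(\Xa))\cap\rrp^m$ is then immediate: each polynomial in $I(\Xa)$ is continuous and vanishes on $\Xa$, hence on $\closedXa$, and $\closedXa\subseteq\rrp^m$ by construction.

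The reverse inclusion $V(I(\Xa))\cap\rrp^m\subseteq\closedXa$ is the heart of the matter, and I expect it to be the main obstacle. Here I would stratify $\rrp^m$ by support: a point of full support lies in $\rrpp^m$ and hence in $\Xa$, since $\Xa$ is the positive part of its toric variety; for a point $\xx$ of support $S\subsetneq[m]$, the binomial equations force $S$ to be compatible with the toric structure (whenever one monomial of a binomial is supported on $S$, so is the other), and one approximates $\xx$ as $t\to\infty$ by the positive points $(\vv s\circ\exp(t\vv w))^{\mm A}\in\Xa$, where $\vv s$ realizes $(x_i)_{i\in S}$ on the sub-toric variety indexed by $S$ and $\vv w$ sends the coordinates outside $S$ to $0$ while fixing those in $S$. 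The existence of such $\vv s,\vv w$ — equivalently, that the nonnegative real points of a toric variety coincide with the Euclidean closure of its positive part — is the standard lattice-ideal fact recorded in \cite[Proposition~7.5]{miller2005combinatorial}, which I would cite. It is precisely at this step that a general spanning set fails: for arbitrary $\Lambda$ the binomials generate only a possibly non-saturated lattice ideal, whose nonnegative real variety can be strictly larger than $\closedXa$ along $\partial\rrp^m$, the phenomenon isolated in \Cref{ex:log_linear_model}; the Markov-basis hypothesis upgrades them to a generating set of the saturated ideal $I(\Xa)$, which is exactly what matches the boundary of $\closedXac$.
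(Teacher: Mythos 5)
Your proposal is correct and takes essentially the same route as the paper, whose entire proof consists of the remark that the lemma follows from the definition of a Markov basis together with elementary facts about lattice ideals, citing \cite[Proposition~7.5]{miller2005combinatorial}. The extra details you supply (the coordinatewise scaling reducing to $\vv c=\mathbbm{1}$, the logarithmic argument for the spanning claim, and continuity for the inclusion $\closedXa\subseteq V(I(\Xa))\cap\rrp^m$) are all sound, and the one genuinely hard step---that the nonnegative real zero set of $I(\Xa)$ equals the Euclidean closure of $\Xa$---is deferred to the same standard lattice-ideal fact that the paper itself invokes.
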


\begin{ex}\label{ex:log_linear_model}
The following will appear throughout the paper as a running example.
Here, we use it to illustrate the importance of picking a sufficiently large spanning set $\Lambda$ when considering the Euclidean closure  $\closedXac$.
Let $d=2$ and $m=4$, and consider the log-linear model given by
\[\mm A=\begin{pmatrix}4&2&3&1\\ 0&2&1&3\end{pmatrix}.\]
This defines the set 
\begin{equation}\label{eq:log_linear_model-Xac}
\Xa = \left\{(t_1^4,t_1^2t_2^2,t_1^3t_2,t_1t_2^3):(t_1,t_2)\in\rrpp^2\right\}.
\end{equation}
An example of a Markov basis for $\mm{A}$ is 
$$\Lambda_\mathrm{mb}=\{(1,1,-2,0),\:(0,2,-1,-1),\: (2,0,-3,1),\: (1,-1,-1,1)\},$$ 
and it follows from \Cref{lem:markov_basis_cuts_out_closedXac} that
\[\closedXa = \{(x_1,x_2,x_3,x_4)\in\rrp^4 : x_1x_2=x_3^2,\: 
 x_2^2=x_3x_4,\:  x_1^2x_4=x_3^3,\:  x_1x_4=x_2x_3 \}.\]
On the other hand, the vector space basis
$$\Lambda_1=\{(1,1,-2,0), (0,2,-1,-1)\}$$ 
is not a Markov basis, and the resulting binomials cut out a strictly larger set in $\rrp^4$:
\[\{(x_1,x_2,x_3,x_4)\in\rrp^4:  x_1x_2=x_3^2 , \:x_2^2=x_3x_4 \}=\closedXa\cup\{(0,x_2,x_3,0) : x_2,x_3\in\rrp\}.\]
In contrast, it follows from \eqref{eq:xac-implicit} that the restriction to the \emph{positive} orthant $\rrpp^4$ can be described by either spanning set:
\begin{align*}
    \Xa &= \{ (x_1,x_2,x_3,x_4)\in\rrpp^4 : x_1x_2=x_3^2,\: 
 x_2^2=x_3x_4,\:  x_1^2x_4=x_3^3,\:x_1x_4 = x_2x_3 
 \} \\
    &= \{(x_1,x_2,x_3,x_4)\in\rrpp^4 :  x_1x_2=x_3^2,\: 
 x_2^2=x_3x_4\}.
\end{align*}
\end{ex}

Thus far, we have focused on parametric and implicit descriptions of log-affine models.  Now we turn to \emph{maximum likelihood estimation}.
Suppose we observe independently and identically distributed data $\vv{u}=(u_1,\ldots,u_m)\in\rrp^m\setminus\{\vv{0}\}$ where $u_j$ is the number of observations of state $j$. We denote the total number of observations by $u_+=\sum_{j=1}^m u_j$, and the corresponding observed distribution by $\bar{\vv{u}}=\vv{u}/u_+\in\Delta_{m-1}$. A
\demph{maximum likelihood estimate (MLE)} of $\vv{u}$ for a model $\mathcal{M}\subseteq\Delta_{m-1}$ is a distribution  $\widehat{\vv{p}}\in\mathcal{M}$ that ``best explains'' the data, in the sense that it maximizes the likelihood function
\begin{equation*}
    L_{\vv u} \colon \mathcal{M}\to \mathbb{R},\quad
    \vv p\mapsto p_1^{u_1} \dots p_m^{u_m}. 
\end{equation*}

For log-linear models, there is a well-known characterization of the (unique) MLE as the point in the model that satisfies some linear constraints. This will later be the foundation for the construction in \Cref{sec:main}. In the algebraic statistics literature, this and similar results are often referred to as Birch's Theorem \cite[Section~1.2]{PachterSturmfels2005}. For proofs of the version we give here, see \cite[Proposition~2.1.5]{drton2008lectures}, \cite[Theorem 8.2.1]{Sul18}, and \cite[Theorem~4.8]{lauritzen1996graphical}.

\begin{thm}[Birch's Theorem]\label{thm:Birch}
Let $\mm A \in \mathbb{Z}^{d \times m}$ with $\mathbbm{1}\in\row(\mm A)$, let $\vv c \in \rrpp^m$, and let $\vv{u}\in\rrp^m$ be a vector such that $\bar{\vv{u}}+\ker(\mm{A})$ intersects $\rrpp^m$. Then there is a unique solution to the system 
\begin{equation*}
    \widehat{\vv{p}} \in \closedXac
    \quad \text{and}\quad 
    {\mm A} \widehat{\vv{p}}={\mm A} \bar{\vv{u}}.
\end{equation*}
This solution $\widehat{\vv{p}}$ has positive entries, and constitutes the MLE of $\vv{u}$ for the model $\mathcal{M}_{\mm A,\vv c}$.
\end{thm}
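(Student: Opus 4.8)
The plan is to produce a positive solution by optimization, prove it is the unique solution even within $\closedXac$, and identify it with the MLE; the hypothesis $\mathbbm{1}\in\row(\mm A)$ will enter through the positive-scaling invariance of $\Xac$, while the condition $(\bar{\vv u}+\ker(\mm A))\cap\rrpp^m\neq\emptyset$ is what forces every solution into the open orthant. For uniqueness among positive points I would use an orthogonality argument: if $\vv p,\vv q\in\Xac$ satisfy $\mm A\vv p=\mm A\vv q$, then $\log\vv p-\log\vv q\in\row(\mm A)$ (each of $\log\vv p,\log\vv q$ differs from $\log\vv c$ by a row-space element), while $\vv p-\vv q\in\ker(\mm A)$. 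Since $\row(\mm A)=\ker(\mm A)^{\perp}$, the inner product $\langle\log\vv p-\log\vv q,\,\vv p-\vv q\rangle$ vanishes; as it equals $\sum_j(\log p_j-\log q_j)(p_j-q_j)$, a sum of nonnegative terms, it is zero only when $\vv p=\vv q$. Hence $\Xac$ meets each fiber of $\vv p\mapsto\mm A\vv p$ in at most one point.

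It then remains to rule out solutions on the boundary. Let $\vv v\in\rrpp^m$ satisfy $\mm A\vv v=\mm A\bar{\vv u}$; since $\vv v-\bar{\vv u}\in\ker(\mm A)=\row(\mm A)^{\perp}$ and $\mathbbm{1}\in\row(\mm A)$, we get $\sum_j v_j=\sum_j\bar u_j=1$, so $\mm A\bar{\vv u}=\sum_j v_j\vv a_j$ lies in the relative interior of $P\coloneqq\mathrm{conv}(\vv a_1,\dots,\vv a_m)$. On the other hand, the toric structure of $\closedXac$ — reflected in the binomial description of \Cref{lem:markov_basis_cuts_out_closedXac} — forces the support of any $\widehat{\vv p}\in\closedXac$ to index a face of $P$, so that $\mm A\widehat{\vv p}=\sum_{j\in\supp(\widehat{\vv p})}\widehat p_j\vv a_j$ lands in a proper face of $P$ whenever $\widehat{\vv p}$ has a zero coordinate. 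Since $\mm A\widehat{\vv p}=\mm A\bar{\vv u}\in\mathrm{relint}(P)$, this cannot happen, so every solution of the system is positive, and the orthogonality argument yields uniqueness.

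For existence and the identification with the MLE I would pass to the exponential-family form of the log-likelihood. Writing a normalized point of the model as $p_j(\vv s)=c_je^{\vv a_j\cdot\vv s}/Z(\vv s)$ with $Z(\vv s)=\sum_j c_je^{\vv a_j\cdot\vv s}$, one has $\ell(\vv s)\coloneqq\log L_{\vv u}(\vv p(\vv s))=\mathrm{const}+(\mm A\vv u)\cdot\vv s-u_+\log Z(\vv s)$, which is concave on $\rr^d$ since $\log Z$ is a log-sum-exp. The relative-interior property established above is precisely the condition guaranteeing that $\ell$ is coercive modulo the directions along which $\vv p(\vv s)$ is constant, so a maximizer $\widehat{\vv s}$ exists; setting $\nabla\ell=0$ gives $\mm A\vv u=u_+\,\mm A\,\vv p(\widehat{\vv s})$, i.e.\ $\mm A\,\vv p(\widehat{\vv s})=\mm A\bar{\vv u}$. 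By the positive-scaling invariance of $\Xac$ (which holds because $\mathbbm{1}\in\row(\mm A)$), the point $\widehat{\vv p}\coloneqq\vv p(\widehat{\vv s})\in\Xac\subseteq\closedXac$ is a positive solution of the system, and since it maximizes $\log L_{\vv u}$ over the model, it is the MLE.

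The step I expect to be the main obstacle is the boundary analysis in the second paragraph: a priori $\closedXac$ could contain a point with a zero coordinate whose image under $\mm A$ still lies in $\mathrm{relint}(P)$, and indeed this is exactly what goes wrong if one replaces $\closedXac$ by the larger binomial set cut out by a spanning set that is not a Markov basis, as in \Cref{ex:log_linear_model} — there the spurious points $(0,x_2,x_3,0)$ map into the relative interior of the segment $P$ and would give extra solutions. The correct claim that the support of a genuine point of $\closedXac$ must index a face of $P$ is the toric input that makes the relative-interior dichotomy clean, and is precisely where the Markov-basis description of the closure is essential.
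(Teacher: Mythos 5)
The paper never proves this theorem---it is quoted from the literature, with the proof deferred to the cited references---so your attempt can only be compared with the standard arguments those references give. Your proposal essentially reconstructs that standard proof, and its load-bearing steps are sound: uniqueness of \emph{positive} solutions via $\langle\log\vv p-\log\vv q,\,\vv p-\vv q\rangle=0$ (orthogonality of $\row(\mm A)$ and $\ker(\mm A)$ plus strict monotonicity of $\log$) is correct; so is the existence argument, where you correctly convert the hypothesis $(\bar{\vv u}+\ker(\mm A))\cap\rrpp^m\neq\emptyset$ together with $\mathbbm{1}\in\row(\mm A)$ into $\mm A\bar{\vv u}\in\operatorname{relint}(P)$, use this for coercivity of the concave log-likelihood $\ell(\vv s)=\mathrm{const}+(\mm A\vv u)\cdot\vv s-u_+\log Z(\vv s)$ modulo the directions $\vv d$ with $\mm A^\top\vv d\in\rr\mathbbm{1}$, and read off $\mm A\,\vv p(\widehat{\vv s})=\mm A\bar{\vv u}$ from stationarity, which simultaneously identifies the solution as a likelihood maximizer.

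The one step you assert rather than prove is the face--support property: that every $\widehat{\vv p}\in\closedXac$ has $\supp(\widehat{\vv p})=\{j:\vv a_j\in F\}$ for some face $F$ of $P$. Your \emph{use} of it is exactly right (a zero coordinate forces $\mm A\widehat{\vv p}$ into a proper face, contradicting $\mm A\widehat{\vv p}=\mm A\bar{\vv u}\in\operatorname{relint}(P)$), but it does not follow from \Cref{lem:markov_basis_cuts_out_closedXac} in the way your phrase ``reflected in the binomial description'' suggests: that lemma identifies $\closedXac$ as the nonnegative zero set of the Markov binomials, and determining which supports are compatible with those equations is itself a nontrivial step. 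What you need is the orbit--face decomposition of the nonnegative toric variety (the geometric core of Birch's theorem in the toric literature), or a direct limit argument: write a point of $\closedXac$ as $\lim_n \vv c\circ\exp(\mm A^\top\vv s_n)$ with $\vv s_n$ in the column span of $\mm A$, pass to a subsequential limit $\vv d$ of $\vv s_n/\lVert\vv s_n\rVert$, note $\vv a_j\cdot\vv d=0$ on the support and $\vv a_j\cdot\vv d\leq 0$ off it, and induct on the dimension of the face $\{x\in P:x\cdot\vv d=0\}$. Two further small patches are needed to close the logic: (i) your uniqueness argument covers points of $\Xac$, so you must also note $\closedXac\cap\rrpp^m\subseteq\Xac$, which follows from \eqref{eq:xac-implicit} by continuity of the binomials; (ii) uniqueness of the MLE itself follows since any maximizer satisfies the same stationarity equations and is therefore a solution of the system. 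With these supplied the proof is complete, and your closing diagnosis---that the spurious points $(0,x_2,x_3,0)$ of \Cref{ex:log_linear_model} are precisely what the face condition excludes---is exactly the phenomenon the paper exploits in \Cref{sec:boundary}.
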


From the discussion above, it is clear that the MLE is an algebraic function of the observed data. Interestingly, it is sometimes a rational function of $\vv{u}$, which the following example illustrates. Determining which log-affine models have this property and how the rational MLE depends on $\mm{A}$ and $\vv{c}$ is a central topic in algebraic statistics, and has been studied in, e.g., \cite{huh2014-mldegree-one, DMS2021-rational-mle,DDPS2023}.

\begin{ex}\label{ex:hardy_weinberg}
An additional example, which we will use for figures and explicit calculations throughout the paper, is the \demph{Hardy--Weinberg model}, given by
\[\mm{A}=\begin{pmatrix}0 & 1 & 2\\2 & 1 & 0\end{pmatrix}\quad\text{and}\quad\vv{c}=(1,2,1).\]
The name alludes to the fact that this model appears in population genetics: each point $(t_1^2,2t_1t_2,t_2^2)\in\mathcal{M}_{\mm A,\vv c}$ encodes the distribution of the three possible diploid genotypes for a gene with two alleles with frequencies $t_1$ and $t_2$. The MLE is a rational function of the observed data $\vv{u}\in\zzpp^m$ (see, e.g., \cite[Example 1.3]{Likelihood-geometry-survey}):
\begin{equation}
\label{eq:hardy_weinberg_mle}
\widehat{\vv{p}}=   \left( \frac{(2u_1+u_2)^2}{4(u_1+u_2+u_3)^2} , \frac{(u_2+2u_3)(2u_1+u_2)}{2(u_1+u_2+u_3)^2}  , \frac{(u_2+2u_3)^2}{4(u_1+u_2+u_3)^2} \right).\end{equation}
The model, the positive toric variety, and the MLE are illustrated in \Cref{fig:ex-ALG-stoichiometric_compatibility_class}.
\end{ex}

\begin{figure}[hbt!]
\centering
    \includegraphics[width=2.5in]{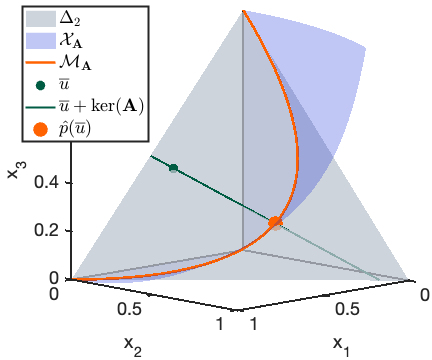}
    \caption{
    Illustration of Birch's Theorem for the Hardy--Weinberg model (\Cref{ex:hardy_weinberg}). The model $\mc{M}_{\mm A}$ (orange) is the intersection of $\Xac$ (blue) and the probability simplex $\Delta_2$ (gray).
    Birch's Theorem guarantees that for any  observed distribution $\bar{\vv u}\in\Delta_{2}$, there is a unique point $\widehat{\vv p}$ in the intersection between $\closedXa$ and $\bar{\vv u} + \ker(\mm A)$ (green), and that this is the MLE.}
    \label{fig:ex-ALG-stoichiometric_compatibility_class}
\end{figure}

We end the section by introducing two more examples of  models that play an important role in the algebraic statistics literature, and which we will use for our numerical experiments  in \Cref{sec:slow-mfld}. 
For the first model, the MLE depends rationally on the observed data $\vv{u}$, whereas this is not the case for the other model.

{\samepage
\begin{ex}\label{ex:independence}
Given two random variables with $r_1$ and $r_2$ states, respectively, their joint distribution is a point $\vv{p}=(p_{ij})_{(i,j)\in[r_1]\times [r_2]}\in\Delta_{r_1r_2-1}$, where $p_{ij}$ is the probability that the first random variable is in state $i$ and the second is in state $j$. If the two random variables are independent, the joint distribution belongs to the \demph{independence model}, which is the image of the parametrization
\begin{equation*}
    \Delta_{r_1-1}\times \Delta_{r_2-1}\to \Delta_{r_1r_2-1}, \quad ((a_1,\dots, a_{r_1}), (b_1,\dots, b_{r_2}))\mapsto (a_ib_j)_{(i,j)\in [r_1]\times [r_2]}.
\end{equation*}
This is a log-linear model, corresponding to the design matrix
\[
\mm{A}_\mathrm{ind}(r_1,r_2)\coloneqq
\left(\begin{array}{ccc ccc c ccc}
1 & \cdots & 1 
\\
  &        &   & 1 & \cdots & 1 
  \\
  &        &   &   &        &   & \ddots 
  \\
  &        &   &   &        &   &        & 1 & \cdots & 1 \\[0.5em]\hline\\[-0.5em]
1 &      &  & 1 &       &  & & 1 &       &  \\
  & \ddots &   &   & \ddots &   & \cdots &   & \ddots &   \\
  &        & 1 &   &        & 1 &  &   &        & 1 \\
\end{array}\right)\in\zz^{(r_1+r_2)\times r_1r_2}.
\]
This model has a unique minimal Markov basis \cite[Proposition 1.2.2]{drton2008lectures} given by
\begin{equation}\label{eq:markov_indep}
  \pm( \vv{e}_{ij}+\vv{e}_{k\ell}-\vv{e}_{i\ell}-\vv{e}_{kj}) , \quad \text{ for }\: 1 \leq i < k \leq r_1,\:\: 1 \leq j < \ell \leq r_2
\end{equation}
where $\vv{e}_{ij}$ is the standard basis vector of $\rr^{r_1r_2}$ corresponding to the $p_{ij}$ coordinate. 
It can be verified that the following rational expression in $\vv u$ satisfies the conditions of Birch's Theorem, and therefore is the MLE:
\begin{equation}\label{eq:MLE_independence_model}
\widehat{p}_{ij}=\frac{\big(\sum_{\ell=1}^{r_2} u_{i\ell}\big)\big(\sum_{k=1}^{r_1} u_{kj}\big)}{\big(\sum_{k=1}^{r_1}\sum_{\ell=1}^{r_2} u_{k\ell}\big)^2}.
\end{equation}
\end{ex}
}

\begin{ex}\label{ex:4cycle}
The theory of graphical models is a rich source of examples of log-linear models. A particularly interesting example is the \demph{undirected 4-cycle} \cite[Example~4]{geiger2006toric}, which has a $16\times 16$ design matrix, and is the smallest graphical model for which the MLE does not depend rationally on the observed distribution.
\end{ex}

%%%%%%%%%%%%%%%%%%%%%%%%%%%%%%%%%%%%%%%%%%%%%%%%%%%%%%%
%%%%%%%%%%%%%%%%%%%%%%%%%%%%%%%%%%%%%%%%%%%%%%%%%%%%%%%
\section{Reaction network theory:  detailed-balanced systems
}
\label{sec:RN-intro}
Reaction network theory is the study of certain families of parametrized systems of ordinary differential equations (ODEs) encoded by directed graphs. In this section, we review some basic notions in reaction network theory, including the notion of \emph{detailed-balanced} steady states. For a more comprehensive introduction to the field, see the review articles \cite{Gunawardena_review, Dic16,YuCraciun2018,FS2025}, and the textbook \cite{Fei19}.

\begin{defn}\label{defn:reaction-network}
    A \demph{reaction network} with species $\{\cf{X}_1,\ldots,\cf{X}_m\}$ is a simple directed graph $G = (V,E)$ with no isolated vertices, where the vertices are points in $\zzp^m$ (interpreted as formal linear combinations of the species), referred to as \demph{complexes}, and the edges, denoted $\yy_i \to \yy_j$ for $\yy_i,\yy_j\in V$, are referred to as \demph{reactions} of the network. 
    A reaction network  $G$ is said to be \demph{reversible} if every reaction is part of a reversible pair, i.e., $\yy_i \to \yy_j \in E$ if and only if $\yy_j \to \yy_i \in E$.   We write $\yy_i \RR \yy_j$ for such a reaction pair. 
\end{defn}

The goal of reaction network theory is to model the time evolution of the concentrations $\xx=(x_1,\ldots,x_m)^\top$ of the species $\cf{X}_1,\ldots,\cf{X}_m$ of the network under some kinetic assumptions. One of the most common, \emph{mass-action kinetics}, is based on the first principle that the reaction rate is proportional to the collision probability of the reacting species.

\begin{defn}
    A \demph{mass-action system} is a weighted directed graph $(G,\vv\kk)$, where $G$ is a reaction network, and $\vv \kk = (\kk_{ij})_{\vv{y}_i\to\vv{y}_j\in E}$ is a tuple of positive real numbers called \demph{rate constants}, indexed by the reactions. The induced dynamics of a mass-action system $(G,\vv{k})$ is given by the ODEs
    \eqn{\label{eq:mas-ode}
        \frac{d\xx}{dt} = \sum_{\vv{y}_i\to\vv{y}_j\in E} \kk_{ij} \xx^{\yy_i} (\yy_j - \yy_i),
    }
    where the right-hand side is a vector of polynomials that we will denote $\vv{f}_{G,\vv{k}}(\vv{x})$.
    For an initial value $\vv{x}_0\in\rrp^m$, we denote the corresponding trajectory of \eqref{eq:mas-ode} by $\vv{x}(t;\vv{x}_0)$ or simply $\vv{x}(t)$.
\end{defn}
 
The nonnegative and positive orthants are \emph{forward-invariant} with respect to \eqref{eq:mas-ode} \cite{Volpert1973,Sontag2001}. In other words, if $\vv{x}_0\in\rrp^m$, then $\xx(t;\vv{x}_0)\in\rrp^m$ for all $t\geq 0$ for which the trajectory is defined, and, analogously, if $\vv{x}_0\in\rrpp^m$, then $\xx(t;\vv{x}_0)\in\rrpp^m$ for all $t\geq 0$ for which the trajectory is defined.

The \demph{stoichiometric subspace} of a reaction network $G$ is the linear subspace 
\[
    S \coloneqq \spn_\rr \{ \yy_j - \yy_i  : \yy_i \to \yy_j \in E\}\subseteq\rr^m.
\]  
Each initial value $\xx_0$ defines a \demph{stoichiometric compatibility class}
$$(\xx_0+S)_{\geq 0} \coloneqq \{\xx_0+\vv s : \vv s\in S \}\cap \rrp^m,$$ 
where the part intersecting the positive orthant will be denoted  $(\xx_0+S)_{>0}$.
Since the right-hand side of \eqref{eq:mas-ode} lies in $S$ for any $\xx \in \rrpp^m$, the subset  $(\xx_0+S)_{>0}$ is also forward-invariant. 
If $\dim(S) < m$, the nontrivial stoichiometric compatibility classes are given by \demph{conservation laws}, which are linear first integrals, represented by a linear system $\mm W \xx = \mm W \xx_0$, where the rows of $\mm W$ span $S^\perp$, and $\xx_0$ is the initial state. The linear forms $\mm W \xx$ are referred to as \demph{conserved quantities}, and $\mm W \xx_0$ as \demph{total amounts}.

\begin{ex}
\label{ex:network} 
    Consider a reaction network consisting of two reversible reactions: 
    \begin{equation}
    \label{crn:running_ex-Lambda1}
    \vcenter{\hbox{\begin{tikzpicture}
            \begin{scope}[shift={(0,0)}]
            \node (1a) at (0,0) [left]  {$2\,\cf{X}_3$} ; 
            \node (1b) at (1.25,0) [right] {$\cf{X}_1 + \cf{X}_2$,}; 
            \draw[fwdrxn, transform canvas={yshift=1.75pt}] (1a)--(1b) node [midway, above] {\ratecnst{$\kk_{12}$}};
            \draw[fwdrxn, transform canvas={yshift=-1.75pt}] (1b)--(1a)  node [midway, below] {\ratecnst{$\kk_{21}$}};
            \end{scope}
            %%%
            \begin{scope}[shift={(5,0)}]
        \node (2a) at (0,0) [left] {$\cf{X}_3 + \cf{X}_4$}; 
        \node (2b) at (1.25,0) [right] {$2\,\cf{X}_2$.};
        \draw[fwdrxn, transform canvas={yshift=1.75pt}] (2a)--(2b) node [midway, above] {\ratecnst{$\kk_{34}$}};
        \draw[fwdrxn, transform canvas={yshift=-1.75pt}] (2b)--(2a) node [midway, below] {\ratecnst{$\kk_{43}$}};
    \end{scope}
        \end{tikzpicture}}}
    \end{equation}
    Its associated ODE system is
    \begin{align*}
        \frac{d\vv{x}}{dt}
        = \begin{pmatrix*}[l]
            k_{12}x_3^2 - k_{21} x_1x_2\\
             k_{12}x_3^2 - k_{21} x_1x_2 + 2k_{34}x_3x_4 - 2k_{43}x_2^2 \\
             -2k_{12}x_3^2 + 2k_{21} x_1x_2 - k_{34}x_3x_4 + k_{43}x_2^2 \\
            -k_{34}x_3x_4   + k_{43} x_2^2
        \end{pmatrix*}. 
    \end{align*}
    Its stoichiometric subspace $S$ is the two-dimensional subspace spanned by $(1,1,-2,0)^\top$ and $(0,2,-1,-1)^\top$. 
    Two conservation laws are expected since $\codim(S) = 2$; for example, 
    \eq{ 
        \begin{pmatrix} 
            4 & 2 & 3 & 1 \\ 
            0 & 2 & 1 & 3 
        \end{pmatrix} ( \vv x - \vv x_0) 
        = 
        \vv{0},  
    }  
    as the rows of the matrix span $S^\perp$. 
    One can also confirm the conservation laws by checking that $4\dot x_1 + 2\dot x_2 + 3 \dot x_3 + \dot x_4= 0$ and $2 \dot x_2 + \dot x_3 + 3\dot x_4 = 0$. The choice of conservation laws is not unique; for example one can also check that $\dot x_1 + \dot x_2 + \dot x_3 + \dot x_4 = 0$.  
\end{ex}

Mass-action systems are capable of very diverse dynamics; they can display oscillation, have multiple stable steady states in the same stoichiometric compatibility class, and even exhibit chaotic dynamics (e.g., \cite[Figure~21]{GasparToth2023}).  That said, some classes of mass-action systems are known to be particularly stable. The most well-known are \emph{detailed-balanced systems}, which originated from Boltzmann~\cite{Boltzmann_1887, Boltzmann_1896}. 

\begin{defn}\label{def:detailed_balanced}
Let $(G,\vv\kk)$ be a mass-action system in $\rr^m$ with associated ODEs $\dot{\xx} = \vv f_{G,\vv\kk}(\xx)$.
\begin{enumerate}[label=(\alph*)]
    \item A point $\widehat{\xx} \in \rrp^m$ is a \demph{steady state} if $\vv f_{G,\vv\kk}(\widehat{\xx}) = \vv 0$. It is a \demph{positive steady state} if $\widehat{\xx} \in \rrpp^m$; otherwise it is a \demph{boundary steady state}. The set of all positive steady states is denoted $\mc{E}_{G,\vv\kk}$.
    
    \item If the network $G$ is reversible, then a positive steady state  $\widehat{\xx}$ is \demph{detailed-balanced} if       
    \begin{equation}\label{eq:detailed_balanced}
            \kk_{ij}\,\widehat{\xx}^{\yy_i}=\kk_{ji}\, \widehat{\xx}^{\yy_j}
            \text{ for all }
            \vv{y}_i\RR\vv{y}_j.   
        \end{equation}
    The set of positive detailed-balanced steady states is denoted $\mc{D}_{G, \vv \kk}$.
\end{enumerate}
\end{defn}

If a mass-action system $(G,\vv{k})$ has a positive steady state that is detailed-balanced, then 
all positive steady states are detailed-balanced; therefore, we say $(G,\vv\kk)$ is a \demph{detailed-balanced mass-action system} 
if $\mc{D}_{G,\vv\kk}$ is nonempty.
This, and other key properties of detailed-balanced systems, are collected in the following theorem. 
We note that these results were originally proved for the more general class of \emph{complex-balanced} mass-action systems (see \cite{HJ72} for a definition and precise statements).

\begin{thm}[\cite{HJ72}]
\label{thm:DB_properties}
Let $(G,\vv\kk)$ be a reversible mass-action system with stoichiometric subspace $S$, and suppose there exists a detailed-balanced steady state $\widehat{\xx} \in \mc{D}_{G,\vv\kk}$. Then the following holds:
\begin{enumerate}[label={\normalfont(\alph*)}]
\setlength{\itemsep}{0.1em}
    \item\label{thm-it:DB-syst}  $\mc{D}_{G,\vv\kk} = \mc{E}_{G,\vv\kk}$. 
    
    \item\label{thm-it:DB-parametrization} The set of positive steady states is given by
    \[\mc{E}_{G,\vv\kk} = \{ \xx \in \rrpp^m : \log \xx - \log \widehat{\xx} \in S^\perp\}=\{\widehat{\xx} \circ \vv t^{\mm W}:  \vv t\in\rrpp^{d}\}\] for any matrix $\mm W\in\rr^{d\times m}$ whose rows span $S^\perp$.
    
    \item\label{thm-it:DB-Birch}   There is exactly one steady state in every positive stoichiometric compatibility class $(\xx_0 + S)_{>0}$.

    \item\label{thm-it:DB-stability}
    The unique positive steady state in $(\vv{x}_0+S)_{>0 }$ is asymptotically stable within $(\xx_0 + S)_{>0}$.
    
\end{enumerate}
\end{thm}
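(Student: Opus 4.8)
The plan is to build everything on the strictly convex pseudo-Helmholtz (relative-entropy) function attached to a positive reference point $\vv a\in\rrpp^m$,
\[
    V_{\vv a}(\xx)=\sum_{i=1}^m\Bigl(x_i\log\tfrac{x_i}{a_i}-x_i+a_i\Bigr),
\]
which has gradient $\nabla V_{\vv a}(\xx)=\log\xx-\log\vv a$ and positive-definite diagonal Hessian $\operatorname{diag}(1/x_1,\dots,1/x_m)$. First I would establish the dissipation inequality with $\vv a=\widehat\xx$: differentiating along \eqref{eq:mas-ode} and pairing each reaction with its reverse gives
\[
    \dot V_{\widehat\xx}=(\log\xx-\log\widehat\xx)\cdot\vv f_{G,\vv\kk}(\xx)=\sum_{\yy_i\RR\yy_j}\bigl(\kk_{ij}\xx^{\yy_i}-\kk_{ji}\xx^{\yy_j}\bigr)\,(\log\xx-\log\widehat\xx)\cdot(\yy_j-\yy_i).
\]
Using the detailed-balance relations $\kk_{ij}\widehat\xx^{\yy_i}=\kk_{ji}\widehat\xx^{\yy_j}$, the factor $(\log\xx-\log\widehat\xx)\cdot(\yy_j-\yy_i)$ simplifies to $\log(\kk_{ji}\xx^{\yy_j})-\log(\kk_{ij}\xx^{\yy_i})$, so each paired summand takes the form $-(a-b)\log(a/b)$ with $a=\kk_{ij}\xx^{\yy_i}$ and $b=\kk_{ji}\xx^{\yy_j}$, which is $\le 0$ by monotonicity of $\log$ and vanishes exactly when $a=b$. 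Hence $\dot V_{\widehat\xx}\le 0$, with equality iff \eqref{eq:detailed_balanced} holds. Part \ref{thm-it:DB-syst} then follows: any $\xx\in\mc E_{G,\vv\kk}$ is a steady state, so $\dot V_{\widehat\xx}=0$ there, forcing every (nonpositive) summand to vanish and hence $\xx$ to be detailed-balanced; the reverse inclusion is immediate.

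For \ref{thm-it:DB-parametrization} I would use \ref{thm-it:DB-syst}: dividing the detailed-balance relations for $\xx$ by those for $\widehat\xx$ and taking logarithms turns $\kk_{ij}\xx^{\yy_i}=\kk_{ji}\xx^{\yy_j}$ into $(\log\xx-\log\widehat\xx)\cdot(\yy_j-\yy_i)=0$ for every reaction; since the reaction vectors span $S$ this says exactly $\log\xx-\log\widehat\xx\in S^\perp$, and the steps reverse. Writing the rows of $\mm W$ as $\vv w^{(1)},\dots,\vv w^{(d)}$, a vector lies in $S^\perp=\spn\{\vv w^{(l)}\}$ iff it equals $\sum_{l}(\log t_l)\,\vv w^{(l)}=\log(\vv t^{\mm W})$ for some $\vv t\in\rrpp^d$, and exponentiating $\log\xx-\log\widehat\xx=\log(\vv t^{\mm W})$ yields the monomial description $\xx=\widehat\xx\circ\vv t^{\mm W}$.

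For \ref{thm-it:DB-Birch} I would treat uniqueness and existence separately. Uniqueness is clean: if $\xx_1,\xx_2\in\mc E_{G,\vv\kk}\cap(\xx_0+S)_{>0}$, then $\log\xx_1-\log\xx_2\in S^\perp$ by \ref{thm-it:DB-parametrization} while $\xx_1-\xx_2\in S$, so $\langle\log\xx_1-\log\xx_2,\,\xx_1-\xx_2\rangle=0$; but this inner product equals $\sum_i(\log x_{1,i}-\log x_{2,i})(x_{1,i}-x_{2,i})\ge 0$ with equality iff $\xx_1=\xx_2$. For existence I would minimize the continuous, coercive, strictly convex $V_{\widehat\xx}$ over the closed convex set $(\xx_0+S)_{\ge 0}$; coercivity yields a minimizer, and since $\partial V_{\widehat\xx}/\partial x_i\to-\infty$ as $x_i\to 0^+$ and $(\xx_0+S)_{>0}\neq\emptyset$, moving from any boundary candidate toward an interior point strictly decreases $V_{\widehat\xx}$, so the minimizer $\xx^*$ is positive. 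The first-order condition for minimizing over the affine set $\xx_0+S$ reads $\nabla V_{\widehat\xx}(\xx^*)=\log\xx^*-\log\widehat\xx\in S^\perp$, i.e.\ $\xx^*\in\mc E_{G,\vv\kk}$ by \ref{thm-it:DB-parametrization}. Finally, for \ref{thm-it:DB-stability} I would take the unique steady state $\xx^*\in(\xx_0+S)_{>0}$; by \ref{thm-it:DB-syst} it is detailed-balanced, so the dissipation computation applies verbatim with $\widehat\xx$ replaced by $\xx^*$. Then $V_{\xx^*}$ is nonnegative on the forward-invariant class, vanishes only at its strict minimizer $\xx^*$, and satisfies $\dot V_{\xx^*}\le 0$ with equality only at steady states; as $\xx^*$ is the sole steady state in the class, $\dot V_{\xx^*}<0$ away from $\xx^*$, making $V_{\xx^*}$ a strict Lyapunov function and $\xx^*$ asymptotically stable within the class.

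The main obstacle will be the dissipation inequality of the first step: parts \ref{thm-it:DB-syst}, the existence half of \ref{thm-it:DB-Birch}, and \ref{thm-it:DB-stability} all hinge on both the sign of $\dot V$ and the sharp characterization of its equality case, and the reversible pairing together with the substitution of the detailed-balance relations is exactly where the hypothesis is consumed. The second delicate point is the existence half of \ref{thm-it:DB-Birch}, where I must confirm both coercivity of $V_{\widehat\xx}$ on a possibly unbounded class and that the gradient blow-up at $\partial\rrp^m$ genuinely pushes the minimizer into $\rrpp^m$ whenever the class meets the positive orthant. I would also note that, since detailed-balanced systems are complex-balanced, all four parts are in fact special cases of the Horn--Jackson results \cite{HJ72}, which could be invoked directly.
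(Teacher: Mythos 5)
Your proposal is correct: the dissipation identity, the equality-case analysis, the $\log$-monotonicity uniqueness argument, and the constrained minimization of the free energy over $(\xx_0+S)_{\geq 0}$ are all sound, and together they deliver all four parts. Note that the paper itself offers no proof of this theorem---it is stated as a known result imported from \cite{HJ72}, with the remark that it was originally proved for the more general complex-balanced setting---and your argument is precisely a reconstruction of that classical Horn--Jackson proof (the pseudo-Helmholtz function $V$ as a strict Lyapunov function, specialized to the detailed-balanced case), so it matches the cited source rather than diverging from it.
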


In the language of \Cref{sec:alg-stat}, part \ref{thm-it:DB-parametrization} says that $\mc{E}_{G,\vv k}=\mc{X}_{\mm{W},\widehat{\xx}}$, and part \ref{thm-it:DB-Birch} corresponds to Birch's Theorem (\Cref{thm:Birch}), with the observed distribution playing the role of $\xx_0$ and the MLE playing the role of the detailed-balanced steady state $\widehat{\xx}$. 

\begin{rmk}
   The strictly convex function $V(\xx) = \xx {}\cdot (\log \xx - \log \xx^* - \mathbbm{1})$, known as the free energy or entropy function in thermodynamics, is a Lyapunov function defined on $\rrpp^m$ for the system \eqref{eq:mas-ode}, with global minimum at $\xx^*$. The similarity to the entropy of a discrete distribution foreshadows the correspondence between positive steady states and MLEs spelled out in \Cref{sec:main}; while the former minimize $V$, the latter minimize the \emph{Kullback--Leibler divergence} $\operatorname{KL}(\bar{\vv u} || \vv p)$ to the empirical distribution $\bar{\vv u}$. 
   See \cite[Section~2]{invariant-toric} and  \cite[Proposition~11]{CDSS09}.
\end{rmk}

\begin{ex}
\label{ex-DYN-stoichiometric_compatibility_class-a}
    Consider the following network with a single reversible reaction pair:
    \begin{equation*}
    \vcenter{\hbox{\begin{tikzpicture}
            \begin{scope}[shift={(0,0)}]
            \node (1) at (0,0) [left] {$\cf{X}_1 + \cf{X}_3$ }; 
            \node (2) at (1.8,0)  {$2\, \cf{X}_2$,};
            \draw[fwdrxn, transform canvas={yshift=2pt}] (1)--(2) node[above, midway] {\ratecnst{$4$}};
            \draw[fwdrxn, transform canvas={yshift=-2pt}] (2)--(1) node[below, midway] {\ratecnst{$1$}};
            \end{scope}
        \end{tikzpicture}}}
\end{equation*}
The corresponding mass-action system is
$$\dot{x}_1 = x_2^2-4x_1x_3,\quad \dot{x}_2= -2x_2^2+8x_1x_3,\quad \dot{x}_3=x_2^2-4x_1x_3,
$$
and two linearly independent conserved quantities are given by the rows of the matrix
\[\mm{W}=\begin{pmatrix}2 & 1 & 0\\ 0 & 1 &2\end{pmatrix}.\]
The system is detailed-balanced, where 
\[\mathcal{E}_{G,\vv{k}}=\mathcal{D}_{G,\vv{k}}=\{(t_1^2,2t_1t_2,t_2^2):(t_1,t_2)\in\rrpp^2\}\subseteq\rrpp^3.\]
\Cref{fig:ex-DYN-stoichiometric_compatibility_class} shows the phase portrait restricted to $\Delta_2$, a stoichiometric compatibility class with an initial state $\xx_0$ and the corresponding steady state $\hat{\xx}$. The similarity with \Cref{fig:ex-ALG-stoichiometric_compatibility_class} is not a coincidence; we will see in \Cref{sec:main} that this network is obtained by \Cref{const:DB_system} for the MLE problem of \Cref{ex:hardy_weinberg}.
\end{ex}

\begin{figure}[h!]
\centering
\includegraphics[width=2.5in]{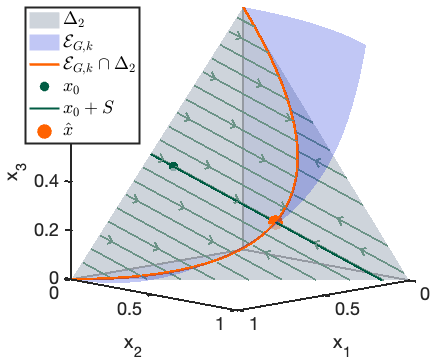}
    \caption{Illustration of \Cref{ex-DYN-stoichiometric_compatibility_class-a}, showing the set of positive steady states $\mathcal{E}_{G,\vv k}$ (blue), an initial state $\vv{x}_0$ and the corresponding stoichiometric compatibility class $\vv{x}_0+S$ (green), and the unique positive steady state $\widehat{\vv{x}}$ (orange), and the phase portrait restricted to $\Delta_2$ (light green). Note that $\dim(S) = 1$.}
    \label{fig:ex-DYN-stoichiometric_compatibility_class}
\end{figure}

An important property that we consider in the next section is \emph{global convergence}, and we therefore devote the remainder of this section to discussing global convergence properties of detailed-balanced systems.

A long-standing conjecture in reaction network theory is that the unique positive steady state in each stoichiometric compatibility class is not only locally stable, but \emph{globally stable}~\cite{Hor74}. This conjecture is referred to as the \emph{Global Attractor Conjecture}, and has a rich history (see \cite{CraciunNazarovPantea2013,Craciun2019}). It has been proven for several classes of networks, including networks where the underlying graph is strongly connected~\cite{Anderson2011, BorosHofbauer2020}; other proven cases involve graph-theoretic and geometric assumptions on the network, see, e.g., \cite{Pantea2012, CraciunNazarovPantea2013, GopalkrishnanMillerShiu2014}. 

It has been proven that if a positive steady state of a complex-balanced system fails to be a global attractor, then there exist trajectories that approach the boundary \cite{SiegelMacLean2000,Angeli2007,Shiu2010}. Below, we state this result in the special case of detailed-balanced systems. Recall that the \demph{$\omega$-limit set} of a trajectory $\xx(t; \xx_0)$ is the set of subsequential limit points, i.e., 
$$\omega(\xx_0) = \left\{\vv{y}\in\rr^n:\text{$\vv{y}=\lim_{n\to\infty}\vv{x}(t_n;\vv{x}_0)$ for some increasing sequence $(t_n)_{n=1}^\infty$}\right\}.$$

\begin{prop}[{\cite[Theorem~3.2]{SiegelMacLean2000}}]\label{prop:omega_limits}
    Let $(G,\vv\kk)$ be a detailed-balanced mass-action system, and let $\xx_0 \in \rrpp^m$.  
    Then the $\omega$-limit set $\omega(\xx_0)$ either consists of boundary steady states or the unique positive steady state in $(\xx_0 +S)_{>0}$. 
\end{prop}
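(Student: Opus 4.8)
The plan is to combine LaSalle's invariance principle with the free-energy Lyapunov function $V(\xx)=\xx\cdot(\log\xx-\log\widehat{\xx}-\mathbbm{1})$ centered at the unique positive steady state $\widehat{\xx}$ of the class $(\xx_0+S)_{>0}$, which exists by \Cref{thm:DB_properties}\ref{thm-it:DB-Birch}. The starting point is two properties of $V$. First, although $\nabla V$ blows up on $\partial\rrp^m$, the function $V$ itself extends continuously to all of $\rrp^m$, since $s\log s\to 0$ as $s\to 0^+$; this continuity up to the boundary is what lets us control limit points. Second, on the open orthant $\rrpp^m$ the detailed-balance relations give the dissipation identity $\dot V=(\log\xx-\log\widehat{\xx})\cdot\vv f_{G,\vv\kk}(\xx)\le 0$, with equality if and only if $\xx$ is a positive steady state (this is the Horn--Jackson computation underlying \Cref{thm:DB_properties}).

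Next I would record the standard facts about the $\omega$-limit set. The trajectory $\xx(t;\xx_0)$ is bounded, its stoichiometric compatibility class being compact, so $\omega(\xx_0)$ is nonempty, compact, connected, and invariant under the flow. Since $t\mapsto V(\xx(t;\xx_0))$ is nonincreasing and bounded below, it converges to some value $c$; by continuity of $V$ on $\rrp^m$ and invariance of $\omega(\xx_0)$, we get $V\equiv c$ on $\omega(\xx_0)$. The proof then splits according to whether $\omega(\xx_0)$ meets the open orthant.

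If $\omega(\xx_0)\cap\rrpp^m\neq\varnothing$, pick $\vv z$ in this intersection. By invariance the whole orbit through $\vv z$ lies in $\omega(\xx_0)$, and since the positive orthant is forward-invariant it stays in $\rrpp^m$, where $V\equiv c$ forces $\dot V\equiv 0$; the equality case above then shows $\vv z$ is a positive steady state, hence $\vv z=\widehat{\xx}$ by uniqueness (\Cref{thm:DB_properties}\ref{thm-it:DB-Birch}). As $\widehat{\xx}$ is asymptotically stable within the class (\Cref{thm:DB_properties}\ref{thm-it:DB-stability}) and belongs to $\omega(\xx_0)$, the trajectory enters and is thereafter trapped in arbitrarily small neighborhoods of $\widehat{\xx}$, so it converges and $\omega(\xx_0)=\{\widehat{\xx}\}$.

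The remaining, and hardest, case is $\omega(\xx_0)\subseteq\partial\rrp^m$, where I must show every point is a boundary steady state. The difficulty is precisely that the dissipation identity used above is unavailable on the boundary, since $\nabla V$ is undefined there. I would resolve this by localizing to faces: for $\vv z\in\omega(\xx_0)$, invariance keeps the forward orbit through $\vv z$ inside $\omega(\xx_0)\subseteq\partial\rrp^m$ with $V\equiv c$, and after the support $\supp(\xx(t;\vv z))$ stabilizes to some $\sigma\subsetneq[m]$ the orbit lies in the relative interior $\rrpp^\sigma$ of a face, on which $V$ restricts to a smooth function of the active coordinates. Using the forward-invariance of faces together with the detailed-balance relations restricted to the reactions supported on $\sigma$, I would recover a nonpositive restricted dissipation identity whose equality case again characterizes steady states, forcing $\vv f_{G,\vv\kk}(\vv z)=\vv 0$. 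Verifying that the restricted subsystem inherits enough detailed-balance structure for this final step is the main obstacle, and is where the continuity of $V$ up to the boundary does the essential work.
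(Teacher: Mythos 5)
Your overall strategy (LaSalle with the free-energy function $V$, its continuity up to $\partial\rrp^m$, and a case split on whether $\omega(\xx_0)$ meets $\rrpp^m$) is the right one, and your interior case is complete and correct. Note that the paper does not prove this proposition itself but cites \cite[Theorem~3.2]{SiegelMacLean2000}; the machinery your boundary case needs is what the paper develops separately, via siphons, in \Cref{prop:all_nonnegative_steady_states_are_DB}. Two smaller inaccuracies first: boundedness of the trajectory does \emph{not} follow from compactness of the stoichiometric compatibility class, since for a general detailed-balanced system the class need not be compact (e.g.\ $\cf{0}\RR\cf{X}_1$ has $S=\rr$, so the class is all of $\rrp$); the correct justification is that $V$ is proper (superlinear growth, so sublevel sets in $\rrp^m$ are compact) and nonincreasing. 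Also, the stabilization of $\supp(\xx(t;\vv z))$ rests on the standard fact that supports are nondecreasing along mass-action trajectories (a positive coordinate stays positive), which you use silently.

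The genuine gaps are in the boundary case, exactly at the step you flag as ``the main obstacle.'' First, faces of $\rrp^m$ are \emph{not} forward-invariant for mass-action systems in general---only faces whose zero set is a siphon are---so ``forward-invariance of faces'' cannot be invoked. What replaces it is a positivity argument: on a time interval where the orbit has constant support $\sigma$, each $i\notin\sigma$ has $x_i\equiv 0$, hence $\dot x_i\equiv 0$; every consumption term already vanishes (its rate contains the factor $x_i=0$) and every production term is nonnegative, so each production term must vanish individually. Consequently every reaction with nonzero rate along the orbit has \emph{both} reactant and product supported in $\sigma$; these reactions form a reversible subnetwork that inherits detailed balance from $\widehat{\xx}$ (restrict $\widehat{\xx}$ to the $\sigma$-coordinates; the detailed-balance relations involve only those coordinates). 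This is precisely the siphon argument of \Cref{prop:all_nonnegative_steady_states_are_DB}, and continuity of $V$ up to the boundary plays no role in it, contrary to your last sentence. Second, even granting the restricted dissipation identity, your argument only shows that $\xx(t;\vv z)$ is a steady state for $t\geq T$, where $T$ is the support-stabilization time; it does not give $\vv f_{G,\vv\kk}(\vv z)=\vv 0$ for the limit point $\vv z$ itself, which is what the proposition asserts. To close this, run the same argument on \emph{every} maximal interval of constant support, not just the final one: on each such interval the active reactions stay within the current support, so the restricted dissipation identity together with $V\equiv c$ forces every point there to be a steady state; hence the orbit is constant on that interval, the support never changes, and $T=0$, so $\vv z$ is itself a boundary steady state. (Alternatively, use that the $\omega$-limit set of a bounded orbit contains a complete two-sided orbit through each of its points and argue backward in time.) With these repairs the proof is correct.
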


A network is said to \demph{lack relevant boundary steady states} if no stoichiometric compatibility class that intersects the positive orthant contains a steady state on the boundary of the nonnegative orthant.  A well-known consequence of \Cref{prop:omega_limits} is that this property implies global convergence.

\begin{prop}[\cite{SiegelMacLean2000}]
\label{prop:no_boundary_steady_states_imply_global_convergence}
    Let $(G,\vv{k})$ be a detailed-balanced mass-action system that lacks relevant boundary steady states. Then for any $\vv{x}_0\in\rrpp^m$, the  trajectory $\vv{x}(t;\vv{x}_0)$ converges to the unique positive steady state $\widehat{\vv{x}}$ in $\mc{E}_{G,\vv{k}}\cap (\vv{x}_0+S)_{>0}$.
\end{prop}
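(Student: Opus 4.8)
The plan is to deduce everything from \Cref{prop:omega_limits}. Once we know that the $\omega$-limit set $\omega(\xx_0)$ is nonempty and contained in the closed stoichiometric compatibility class of $\xx_0$, the absence of relevant boundary steady states forces the dichotomy in \Cref{prop:omega_limits} to land on the positive steady state, giving $\omega(\xx_0)=\{\widehat{\xx}\}$, and this in turn yields convergence of the trajectory.

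First I would establish that the forward trajectory is bounded and defined for all $t\geq 0$, so that $\omega(\xx_0)$ is nonempty. Since $(G,\vv k)$ is detailed-balanced, the free energy function $V(\xx)=\xx\cdot(\log\xx-\log\widehat{\xx}-\ones)$ from the preceding remark is a strict Lyapunov function on $\rrpp^m$, hence nonincreasing along the trajectory. As $V$ extends continuously to $\rrp^m$ (using $x_i\log x_i\to 0$ as $x_i\to 0$) and tends to $+\infty$ as $\|\xx\|\to\infty$ within $\rrp^m$, its sublevel sets intersected with the closed class $(\xx_0+S)_{\geq 0}$ are compact. The trajectory therefore remains in the compact set $\{\xx\in(\xx_0+S)_{\geq 0}:V(\xx)\leq V(\xx_0)\}$, cannot blow up in finite time, and has a nonempty $\omega$-limit set.

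Next comes the key step. Every point of $\omega(\xx_0)$ is a limit of trajectory points, all of which lie in $(\xx_0+S)_{>0}$; since $(\xx_0+S)_{\geq 0}$ is closed, $\omega(\xx_0)\subseteq(\xx_0+S)_{\geq 0}$. This compatibility class contains $\xx_0\in\rrpp^m$, so it intersects the positive orthant. By \Cref{prop:omega_limits}, either $\omega(\xx_0)$ consists entirely of boundary steady states, or $\omega(\xx_0)=\{\widehat{\xx}\}$. In the first case these would be boundary steady states sitting in a compatibility class that meets $\rrpp^m$, i.e.\ \emph{relevant} boundary steady states, contradicting the hypothesis (this is exactly where nonemptiness of $\omega(\xx_0)$ is essential). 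Hence $\omega(\xx_0)=\{\widehat{\xx}\}$. Finally, a bounded trajectory whose $\omega$-limit set is a single point converges to that point, so $\xx(t;\xx_0)\to\widehat{\xx}$, which is the unique positive steady state in $\mc{E}_{G,\vv k}\cap(\xx_0+S)_{>0}$ by \Cref{thm:DB_properties}\ref{thm-it:DB-Birch}.

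The main obstacle is really the nonemptiness and boundedness of $\omega(\xx_0)$: without it the dichotomy of \Cref{prop:omega_limits} is vacuous and the argument collapses. Making the Lyapunov/compactness step precise — the continuous extension of $V$ to the boundary and the compactness of its sublevel sets within a fixed compatibility class — is the only technical content; the remaining deductions are formal once this is secured.
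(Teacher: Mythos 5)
Your proof is correct and follows exactly the route the paper intends: the paper states this result as a direct consequence of \Cref{prop:omega_limits} (citing \cite{SiegelMacLean2000}), and your argument fills in precisely the standard supporting details—boundedness of trajectories via the free-energy Lyapunov function, nonemptiness of $\omega(\xx_0)$, exclusion of the boundary alternative by the no-relevant-boundary-steady-states hypothesis, and convergence from a singleton $\omega$-limit set. No gaps; your emphasis on nonemptiness of $\omega(\xx_0)$ as the crux is exactly right.
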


A useful condition to rule out relevant boundary steady states comes from the theory of siphons, developed in \cite{Angeli2007} and expanded in several subsequent works, e.g., \cite{Shiu2010,Deshpande2014,AlRadhawiAngeli16,Marcondes2017}.

\begin{defn}\label{def:siphon}
A \demph{siphon} in a reaction network $G = (V,E)$ with species $\{\cf{X}_1,\ldots,\cf{X}_m\}$ is a set of indices $Z\subseteq [m]$ such that for every $i\in Z$ and every reaction $\vv{y}\to \vv{y}' \in E$ with $y'_i>0$, there exists $j\in Z$ such that $y_j>0$. 
\end{defn}

Intuitively, a siphon keeps track of which species can simultaneously have zero concentrations at equilibrium.
It turns out that there cannot exist relevant boundary steady states if \lq\lq mass is conserved\rq\rq\ within each minimal siphon, in the following sense.

\begin{prop}[{\cite{Angeli2007}, \cite[Proposition~2(v)]{Marcondes2017}}]\label{prop:siphon_critierion}
Consider a network with stoichiometric subspace $S\subseteq\rr^m$. Suppose that for every inclusion-minimal siphon $Z\subseteq [m]$, there exists $\vv w\in S^\perp\cap\rrp^m$ with $\supp(\vv w)=Z$. Then the network lacks relevant boundary steady states.
\end{prop}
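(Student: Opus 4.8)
The plan is to argue by contradiction. Suppose the network admits a relevant boundary steady state: a steady state $\vv{x}^\ast\in\partial\rrp^m$ whose stoichiometric compatibility class meets the positive orthant, so that we may fix a point $\vv{x}_0\in(\vv{x}^\ast+S)_{>0}\subseteq\rrpp^m$. Write $Z=\{i\in[m]:x^\ast_i=0\}$ for the (nonempty) set of vanishing coordinates of $\vv{x}^\ast$. The heart of the argument is to show that $Z$ is itself a siphon, and then to combine the hypothesis on inclusion-minimal siphons with a conservation law to contradict the positivity of $\vv{x}_0$.

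The key structural step, and the one where the mass-action kinetics is genuinely used, is to prove that $Z$ is a siphon; this is the main obstacle. Suppose it is not. Then by \Cref{def:siphon} there exist an index $i\in Z$ and a reaction $\vv y\to\vv y'\in E$ with $y'_i>0$ but $\supp(\vv y)\cap Z=\emptyset$. I would evaluate the $i$-th coordinate of the vector field \eqref{eq:mas-ode} at $\vv{x}^\ast$ and split the sum according to the reactant complexes: any reaction whose reactant $\vv y$ has a coordinate in $Z$ contributes $(\vv{x}^\ast)^{\vv y}=0$, while every remaining reaction satisfies $\supp(\vv y)\cap Z=\emptyset$, hence $y_i=0$ and $(\vv{x}^\ast)^{\vv y}>0$, contributing the nonnegative term $\kk_{\vv y\to\vv y'}(\vv{x}^\ast)^{\vv y}y'_i$. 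Since the violating reaction has $y'_i>0$, its contribution is strictly positive, so the $i$-th component of $\vv{f}_{G,\vv\kk}(\vv{x}^\ast)$ is strictly positive, contradicting $\vv{f}_{G,\vv\kk}(\vv{x}^\ast)=\vv 0$. Hence $Z$ is a siphon. (This is a special case of a known result of Angeli--De Leenheer--Sontag.) Everything after this point is elementary linear algebra.

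With $Z$ a nonempty siphon, I would choose a minimal element $Z^\ast$ of the family of nonempty siphons contained in $Z$; such an element exists since the family is nonempty and finite, and it is inclusion-minimal among \emph{all} siphons, because any strictly smaller nonempty siphon would again lie inside $Z$. By hypothesis there is a conservation vector $\vv w\in S^\perp\cap\rrp^m$ with $\supp(\vv w)=Z^\ast$. Since $\vv{x}^\ast-\vv{x}_0\in S$ and $\vv w\in S^\perp$, the associated conserved quantity agrees at the two points: $\vv w\cdot\vv{x}^\ast=\vv w\cdot\vv{x}_0$. But $\supp(\vv w)=Z^\ast\subseteq Z$ forces $\vv w\cdot\vv{x}^\ast=\sum_{i\in Z^\ast}w_ix^\ast_i=0$, whereas $\vv w\geq\vv 0$ with $\vv w\neq\vv 0$ and $\vv{x}_0\in\rrpp^m$ give $\vv w\cdot\vv{x}_0=\sum_{i\in Z^\ast}w_i(x_0)_i>0$. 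This produces the contradiction $0=\vv w\cdot\vv{x}^\ast=\vv w\cdot\vv{x}_0>0$, so no relevant boundary steady state can exist, as claimed.
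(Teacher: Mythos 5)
Your proof is correct; the main thing to note is that the paper itself offers \emph{no} proof of this proposition---it is imported verbatim from the literature (\cite{Angeli2007}, \cite[Proposition~2(v)]{Marcondes2017})---so there is no internal argument to compare against, and what you have written is a self-contained reconstruction of the standard one. Your three steps are exactly those of the cited sources: (i) the zero set $Z$ of a boundary steady state is a siphon, which you prove directly from the mass-action form \eqref{eq:mas-ode} by splitting the sum according to whether $\supp(\vv y)$ meets $Z$; this is precisely the content of \cite[Proposition~2]{Angeli2007}, which the paper invokes as a black box elsewhere (e.g.\ in the proof of \Cref{prop:global_convergence_also_from_boundary_if_no_boundary_steady_states}); (ii) a minimal nonempty siphon contained in $Z$ is inclusion-minimal among all siphons of the network; (iii) the hypothesized conservation vector $\vv w$ supported on that minimal siphon yields $0=\vv w\cdot\vv x^{\ast}=\vv w\cdot\vv x_{0}>0$, a contradiction. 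One small point worth flagging: the paper's \Cref{def:siphon} does not require siphons to be nonempty, and under a strictly literal reading the empty set is vacuously a siphon, which would trivialize the notion of ``inclusion-minimal siphon''; your proof implicitly (and correctly) restricts to nonempty siphons, which is the convention in the cited sources and clearly the intended reading here. With that understanding, every step of your argument is sound and complete.
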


A network that satisfies the condition in \Cref{prop:siphon_critierion} is said to be \demph{structurally persistent}. 

We end this section by using the notion of siphons to extend \Cref{thm:DB_properties}\ref{thm-it:DB-syst} and \Cref{prop:no_boundary_steady_states_imply_global_convergence} to also account for the dynamics on the boundary $\partial\rrp^m$, which we will need in \Cref{sec:boundary}. 

\pagebreak

Consider the set of nonnegative steady states 
\[
    \mc{E}_{G,\vv \kk}^{\geq 0} \coloneqq 
    \left\{
        \xx \in \rrp^m : 
        \vv{f}_{G,\vv\kk}(\xx) = \vv 0
    \right\}.
\]
We say $\widehat{\xx} \in \mc{E}_{G,\vv \kk}^{\geq 0}$ is a \demph{nonnegative detailed-balanced steady state} if it satisfies \eqref{eq:detailed_balanced}; we denote the set of such points by $\mc{D}_{G,\vv \kk}^{\geq 0}$. 
The following is the detailed-balanced analog of \cite[Theorem~3.1]{CappellettiWiuf2016}.

\begin{prop}\label{prop:all_nonnegative_steady_states_are_DB}
Let $(G,\vv{k})$ be a detailed balanced mass-action system. Then the set of nonnegative steady states and the set of nonnegative detailed-balanced states coincide, i.e., $\mc{E}_{G,\vv k}^{\geq 0}=\mc{D}_{G,\vv k}^{\geq 0}$. 
\end{prop}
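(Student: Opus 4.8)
The inclusion $\mc{D}_{G,\vv k}^{\geq 0}\subseteq\mc{E}_{G,\vv k}^{\geq 0}$ is immediate: if $\widehat{\xx}\in\rrp^m$ satisfies the detailed-balance equations \eqref{eq:detailed_balanced}, then each reversible pair contributes zero net flux to \eqref{eq:mas-ode}, so $\vv{f}_{G,\vv k}(\widehat{\xx})=\vv 0$. The entire content is therefore the reverse inclusion: every nonnegative steady state $\xx^*$ is detailed-balanced. The plan is to reduce this to the positive case (\Cref{thm:DB_properties}\ref{thm-it:DB-syst}) applied to a smaller network obtained by restricting to the support of $\xx^*$. Fix a positive detailed-balanced steady state $\widehat\xx\in\mc D_{G,\vv k}$ (which exists by the standing assumption) and a nonnegative steady state $\xx^*$, and write $\sigma=\supp(\xx^*)$. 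Call a complex $\vv y$ \emph{active} if $\supp(\vv y)\subseteq\sigma$, so that $(\xx^*)^{\vv y}>0$, and \emph{inactive} otherwise, so that $(\xx^*)^{\vv y}=0$.

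The first and key step is a support argument of siphon type, ruling out ``leakage'' out of $\sigma$. For each $\ell\in[m]\setminus\sigma$ we have $x_\ell^*=0$, and in the equation $\dot x_\ell=\sum_{\vv y_i\to\vv y_j}k_{ij}(\xx^*)^{\vv y_i}(y_{j,\ell}-y_{i,\ell})=0$ only reactions with an active reactant contribute; for such reactions $y_{i,\ell}=0$ since $\ell\notin\sigma$, so every summand $k_{ij}(\xx^*)^{\vv y_i}y_{j,\ell}$ is nonnegative and must hence vanish. I conclude that every reaction with active reactant has active product. Applied to both directions of a reversible pair, this shows that if either complex of the pair is active then so is the other; hence within each reversible pair the two complexes are either both active or both inactive, and the ``one active, one inactive'' case cannot occur at a steady state.

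With leakage excluded, I restrict to the support. Let $G_\sigma$ be the mass-action system on the species $\{\cf X_\ell:\ell\in\sigma\}$ whose reactions are the both-active reversible pairs (these involve only species in $\sigma$), with the inherited rate constants. Since $\supp(\vv y_i)\subseteq\sigma$ for active complexes, the identities $k_{ij}\widehat\xx^{\vv y_i}=k_{ji}\widehat\xx^{\vv y_j}$ survive verbatim after discarding the off-support coordinates, so $\widehat\xx|_\sigma\in\rrpp^{\sigma}$ is a positive detailed-balanced steady state of $G_\sigma$; in particular $G_\sigma$ is a detailed-balanced mass-action system. Moreover $\xx^*|_\sigma$ is a \emph{positive} steady state of $G_\sigma$: for $\ell\in\sigma$ the full steady-state relation $\dot x_\ell(\xx^*)=0$ agrees with the corresponding $G_\sigma$-relation, because the discarded pairs are both inactive and hence carry zero net flux.

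It then suffices to invoke the positive case: by \Cref{thm:DB_properties}\ref{thm-it:DB-syst} applied to $G_\sigma$, every positive steady state of $G_\sigma$ is detailed-balanced, so $\xx^*|_\sigma$ satisfies the detailed-balance equations for all both-active pairs. Since both-inactive pairs satisfy them trivially and no other pairs occur, $\xx^*$ is detailed-balanced, i.e. $\xx^*\in\mc D_{G,\vv k}^{\geq 0}$. I expect the main obstacle to be the support argument of the second paragraph: the Lyapunov/dissipation proof behind the positive case breaks down on $\partial\rrp^m$, where $\log\xx$ is undefined, and it is precisely this nonnegativity-of-summands observation that lets a boundary steady state be reinterpreted as an interior steady state of the restricted network. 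A minor point to verify is that $G_\sigma$ is a genuine reaction network in the sense of \Cref{defn:reaction-network} (in particular that it has no isolated vertices), but this is routine once leakage has been excluded.
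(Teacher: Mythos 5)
Your proof is correct and takes essentially the same approach as the paper: both arguments show that the complement of the support of a nonnegative steady state acts as a siphon (you derive this directly from the vanishing of a sum of nonnegative terms in the steady-state equation), restrict to the support subnetwork, observe that it inherits detailed balance from the positive detailed-balanced state $\widehat{\xx}$, and then invoke \Cref{thm:DB_properties}\ref{thm-it:DB-syst} on that subnetwork. The paper merely packages the same reasoning as a union of subnetworks $G^Z$ over all subsets $Z\subseteq[m]$, showing that only siphons contribute, rather than arguing pointwise at a fixed steady state.
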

\begin{proof}
    The inclusion $\mc{D}_{G,\vv k}^{\geq 0}\subseteq\mc{E}_{G,\vv k}^{\geq 0}$ is immediate. To prove the inclusion $\mc{E}_{G,\vv k}^{\geq 0}\subseteq\mc{D}_{G,\vv k}^{\geq 0}$, we begin by introducing some notation.  For a set $Z\subseteq[m]$, let $G^Z$ denote the subnetwork of $G$ obtained by removing all reactions where a species with index in $Z$ appears as a reactant, and let $\vv{k}^Z$ be the restriction of $\vv{k}$ to the reactions of $G^Z$. For any vector $\xx \in \rrp^m$ and $Z \subseteq [m]$, we denote by $\xx^Z$ the vector in $\rrp^m$ where $x^Z_i = x_i$ for all $i \not\in Z$, and $x^Z_i = 0$ for all $i \in Z$. Finally, by a slight abuse of notation, we view the sets of positive steady states $\mc{E}_{G^{Z},\vv k^Z}$ and positive detailed-balanced steady state $\mc{D}_{G^Z,\vv k^Z}$ of the subnetwork as being embedded in $\rr^m$, by supplementing with $\widehat{x}_i = 0$ for any $i\in Z$. Then it holds that 
    $$\mc{E}_{G,\vv k}^{\geq 0}=\bigcup_{Z\subseteq[m]} \mc{E}_{G^Z,\vv k^Z}.$$
    
    We claim that for every $Z\subseteq[m]$ such that $\mc{E}_{G^Z,\vv{k}^Z}$ is nonempty,  $Z$ must be a siphon. Indeed, if $Z$ is not a siphon, then $G^Z$ has a reaction where some species $\cf{X}_i$ with $i\in Z$ is formed, but no species $\cf{X}_j$ with $j\in Z$ are reactants, which means that this reaction appears in $G^Z$. Since $G^Z$ by construction does not have any reactions where $\cf{X}_i$ is consumed, this implies that $\mc{E}_{G^Z,\vv k^Z}$ is empty.

    Suppose $Z\subseteq [m]$ is a siphon. Then for every reversible reaction pair $\vv{y}\RR \vv{y}'$ that appears in $G^Z$, it holds that neither $\vv{y}$ nor $\vv{y}'$ involves any species with index in $Z$. Hence, $\widehat{\vv{x}}^Z\in \mc{D}_{G^Z,\vv k^Z}$ for every $\widehat{\vv{x}}\in \mc{D}_{G,\vv{k}}$. This implies that $\mc{D}_{G^Z,\vv{k}^Z}$ is nonempty, and therefore equal to $\mc{E}_{G^Z,\vv{k}^Z}$ by \Cref{thm:DB_properties}\ref{thm-it:DB-syst}. 
    Thus we obtain
    \[
        \mc{E}^{\geq 0}_{G,\vv\kk} 
        = \bigcup_{\substack{ Z \text{ is a} \\ \text{siphon}}} \mc{E}_{G^Z, \vv\kk^Z}
        = \bigcup_{\substack{ Z \text{ is a} \\ \text{siphon}}} \mc{D}_{G^Z, \vv\kk^Z}
        \subseteq \mc{D}^{\geq 0}_{G,\vv\kk}.\qedhere
    \]
\end{proof}

The following extends \Cref{prop:no_boundary_steady_states_imply_global_convergence} to the case when the system is initialized at $\partial\rrp^m$.

\begin{prop}
\label{prop:global_convergence_also_from_boundary_if_no_boundary_steady_states}
Let $(G,\vv{k})$ be a detailed-balanced mass-action system that lacks relevant boundary steady states, and let $\vv{x}_0\in\rrp^m$ be such that $(\vv{x}_0+S)_{>0}$ is nonempty. Then the trajectory $\vv{x}(t;\vv{x}_0)$ converges to the unique positive steady state $\widehat{\vv{x}}$ in $\mc{E}_{G,\vv{k}}\cap(\vv{x}_0+S)_{>0}$
\end{prop}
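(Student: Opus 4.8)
The plan is to reduce to \Cref{prop:no_boundary_steady_states_imply_global_convergence} by showing that the trajectory must enter the open orthant $\rrpp^m$. Once we exhibit a time $t_0\geq 0$ with $\vv{x}(t_0;\vv{x}_0)\in\rrpp^m$, we are done: since the right-hand side of \eqref{eq:mas-ode} always lies in $S$, the trajectory stays in $\vv{x}_0+S$, so $(\vv{x}(t_0)+S)_{>0}=(\vv{x}_0+S)_{>0}$, and \Cref{prop:no_boundary_steady_states_imply_global_convergence} applied with initial value $\vv{x}(t_0)$ yields convergence to the same $\widehat{\vv{x}}$ (which exists and is unique by \Cref{thm:DB_properties}\ref{thm-it:DB-Birch}).

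To locate such a $t_0$, I would first analyze the support for small times. Since the right-hand side of \eqref{eq:mas-ode} is polynomial, each coordinate $x_i(t)$ is real-analytic, and as $\rrp^m$ is forward-invariant it is nonnegative; hence for each $i$ either $x_i\equiv 0$ or $x_i>0$ on some interval $(0,\epsilon)$. Collecting the indices of the first type into a set $Z^*\subseteq[m]$, we get $\supp(\vv{x}(t))=[m]\setminus Z^*$ for all small $t>0$. A production argument, exactly as in the proof of \Cref{prop:all_nonnegative_steady_states_are_DB}, shows $Z^*$ is a siphon: if some $i\in Z^*$ were produced by a reaction whose reactants avoid $Z^*$, that reaction would have positive rate for small $t>0$ and force $x_i>0$, a contradiction. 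It then remains to rule out $Z^*\neq\emptyset$.

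Suppose $Z^*\neq\emptyset$. Because $Z^*$ is a siphon, the face $\{\vv{x}:x_i=0\text{ for }i\in Z^*\}$ is forward-invariant, so fixing $t_1\in(0,\epsilon)$ the trajectory remains on this face for all $t\geq t_1$, where it coincides with a trajectory of the reduced network $(G^{Z^*},\vv{k}^{Z^*})$ started at $\vv{x}(t_1)$, a point in the relative interior $\rrpp^{[m]\setminus Z^*}$. As in the proof of \Cref{prop:all_nonnegative_steady_states_are_DB}, the siphon property guarantees that $G^{Z^*}$ is again detailed-balanced, since the restriction $\widehat{\vv{x}}^{Z^*}$ of any $\widehat{\vv{x}}\in\mc{D}_{G,\vv{k}}$ is a positive detailed-balanced steady state of $G^{Z^*}$. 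I would then apply \Cref{prop:omega_limits} to $(G^{Z^*},\vv{k}^{Z^*})$: its $\omega$-limit set, which is nonempty because detailed-balanced trajectories are bounded, consists of steady states of $G^{Z^*}$. Using once more that $Z^*$ is a siphon, each such steady state extends by zeros on $Z^*$ to a steady state of $G$ lying on $\partial\rrp^m$. Since $\omega$-limit points lie in the closed set $\vv{x}_0+S$ and $(\vv{x}_0+S)_{>0}\neq\emptyset$ by hypothesis, these are \emph{relevant} boundary steady states of $G$, contradicting our assumption. Hence $Z^*=\emptyset$, the trajectory enters $\rrpp^m$, and the conclusion follows.

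The main obstacle is the step ruling out $Z^*\neq\emptyset$: one must correctly pass to the reduced network, verify that it is detailed-balanced, and identify its boundary limit points as relevant boundary steady states of the original network lying in the class $\vv{x}_0+S$. The analyticity-plus-forward-invariance bookkeeping that defines the siphon $Z^*$, together with the boundedness of detailed-balanced trajectories ensuring the $\omega$-limit set is nonempty, are the technical points that require the most care.
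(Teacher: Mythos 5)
Your proof is correct, and its skeleton is the same as the paper's: reduce to \Cref{prop:no_boundary_steady_states_imply_global_convergence} by showing the trajectory enters $\rrpp^m$; otherwise, the set $Z$ of identically vanishing coordinates is a siphon, the subnetwork $(G^Z,\vv{k}^Z)$ is detailed-balanced, and a relevant boundary steady state appears, contradicting the hypothesis. The execution differs in two places. First, where you prove the siphon property of $Z$ by hand (analyticity of solutions plus the production argument), the paper simply cites \cite[Proposition~2]{Angeli2007}; both are fine. Second, and more substantively, the endgames diverge: the paper concludes in one line by asserting that $\widehat{\vv{x}}^Z$, the coordinate restriction of a positive detailed-balanced steady state, is a steady state of the subnetwork lying in the same stoichiometric compatibility class as $\vv{x}_0$, whereas you apply \Cref{prop:omega_limits} to the subnetwork trajectory and extract its $\omega$-limit points as boundary steady states of $G$. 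Your route needs an extra ingredient that the paper never states --- nonemptiness of the $\omega$-limit set, i.e., boundedness of trajectories, which holds for detailed-balanced systems via the Lyapunov function --- but it buys precision: membership of your limiting steady states in $\vv{x}_0+S$ is automatic, since the trajectory never leaves this closed set, whereas the paper's assertion that $\widehat{\vv{x}}^Z\in\vv{x}_0+S$ is not obvious as stated and is better justified the way you implicitly do it, namely by working inside the face and in the compatibility class of the trajectory itself (alternatively, by applying \Cref{thm:DB_properties}\ref{thm-it:DB-Birch} to the subnetwork there). In short, your argument is longer but more self-contained, and it sidesteps the one step of the paper's proof that is left implicit.
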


\begin{proof}
The desired claim follows from \Cref{prop:no_boundary_steady_states_imply_global_convergence} if the trajectory $\vv{x}(t;\vv{x}_0)$ enters the positive orthant at some $t>0$. Suppose for a contradiction that the trajectory is fully contained in $\partial\rrp^m$, and consider the (then nonempty) set
\[Z:=\left\{i\in [m]:\text{$x_i(t;\vv{x}_0)=0$\: for all $t>0$}\right\}.\]
It follows from \cite[Proposition~2]{Angeli2007} that $Z$ is a siphon. Then, using the notation in the proof of \Cref{prop:all_nonnegative_steady_states_are_DB}, $(G^Z,\vv{k}^Z)$ is a detailed-balanced subnetwork, with $\widehat{\vv{x}}^Z$ as a steady state. Since, $\widehat{\vv{x}}^Z$ lies in the same stoichiometric compatibility class as $\vv{x}_0=\vv{x}_0^Z$, this contradicts the assumption that $(G,\vv{k})$ lacks relevant boundary steady states.
\end{proof}

%%%%%%%%%%%%%%%%%%%%%%%%%%%%%%%%%%%%%%%%%%%%%%%%%%%%%%%%%%%%%%%%%%%%%%%%%%%%%%%%%%%%%%%%%%%%%%%%%%
\section{Reaction network representation of log-affine models}\label{sec:main}

{\samepage
The goal of this section is to connect the following two scenarios:
\begin{itemize}
    \item  Algebraic statistics: We saw in \Cref{thm:Birch}  that for a log-affine model $\mc{M}_{\mm A,\vv c}$, the unique MLE for a vector of counts $\vv{u}\in\rrpp^m$ is given by $\mc{M}_{\mm A,\vv c}\cap (\bar{\vv{u}}+\ker(\mm{A}))$. 

    \item Reaction network theory: We saw in  \Cref{thm:DB_properties}
    that for a detailed-balanced system with positive steady state set $\mc{E}_{G,\vv\kk}$ and an initial value $\xx_0 \in \rrpp^m$, the unique positive steady state is given by $\mc{E}_{G,\vv\kk} \cap (\xx_0 + S)$.
\end{itemize} 
More specifically, we will construct a mass-action system for which the set of positive steady states is $\Xac$ and the stoichiometric subspace is $S = \ker(\mm{A})$. Hence, the MLE of the statistical model $\mc M_{\mm A,\vv c}$ for an observed distribution $\bar{\vv{u}}$ will be the unique steady state in $(\bar{\vv{u}} + S)_{>0}$. 
}

A key step in the construction will be to choose a finite set $\Lambda\subseteq\ker_{\zz}(\mm{A})$ that spans $\ker(\mm{A})$ in the vector space sense. This generalizes the construction in \cite[Definition~8]{Gopalkrishnan2016}, which restricts to the case when $\Lambda$ is a $\zz$-basis for $\ker(\mm{A})$ and $\vv{c}=\mathbbm{1}$.
Recall that $\vv\gamma^+$ and $\vv\gamma^-$ are respectively the nonnegative and nonpositive parts of a vector $\vv\gamma\in\zz^m$. 

{\samepage
\begin{cons}\label{const:DB_system}
For a finite set of integer vectors $\Lambda \subseteq \zz^m$ that span $\ker(\mm{A})$, and a vector $\vv{c}\in\rrpp^m$, we define the
mass-action system $G_{\Lambda, \vv c}$ to be the collection
\begin{equation*}
\vcenter{\hbox{\upshape\begin{tikzpicture}
            \begin{scope}[shift={(0,0)}]
            \node (1) at (0,0) {$\gamma^-_1 \cf{X}_1 + \cdots + \gamma^-_m \cf{X}_m$}; 
            \node (2) at (5.5,0) {$\gamma^+_1 \cf{X}_1 + \cdots + \gamma^+_m \cf{X}_m$,}; 
            \draw[fwdrxn, transform canvas={yshift=2pt}] (1)--(2) node[above, midway] {\ratecnst{$\vv c^{\vv \gamma^+}$}};
            \draw[fwdrxn, transform canvas={yshift=-2pt}] (2)--(1) node[below, midway] {\ratecnst{$\vv c^{\vv \gamma^-}$}};
            \node at (8,0) [right] {$\vv\gamma\in \Lambda$\,,};
            \end{scope}
        \end{tikzpicture}}}
\end{equation*}
and denote by $G_{\Lambda}$ the underlying reaction network.
The associated system of ODEs for $G_{\Lambda,\vv{c}}$ is
\begin{equation*}
        \frac{d\xx}{dt} = \sum_{\vv\gamma\in \Lambda} 
            \left( \vv c^{\vv \gamma^+} \xx^{\vv \gamma^-} - \vv c^{\vv \gamma^-} \xx^{\vv \gamma^+} \right)  \vv{\gamma}.
\end{equation*}
\end{cons}
}

We begin by showing that the set of positive steady states of $G_{\Lambda,\vv{c}}$ is $\Xac$, defined in \eqref{eq:define-xac}.

\begin{thm}
\label{thm:realization_of_positive_toric_varieties}
     Let $\mm A \in \zz^{d\times m}$ with $\mathbbm{1}\in\row(A)$, and $\vv c \in \rrpp^m$, and let $\Lambda \subseteq \ker_\zz(\mm{A})$ be a finite spanning set for $\ker(\mm{A})$. Then the mass-action system $G_{\Lambda,\vv{c}}$ in \Cref{const:DB_system} is detailed-balanced, and the set of positive steady states is $\Xac$.
\end{thm}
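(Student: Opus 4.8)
The plan is to prove detailed-balance by exhibiting one explicit detailed-balanced steady state, and then to read off the entire set of positive steady states from \Cref{thm:DB_properties}\ref{thm-it:DB-parametrization}.

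The key observation is that the scaling vector $\vv c$ is itself a detailed-balanced steady state. For the reversible pair indexed by $\vv\gamma\in\Lambda$, whose reactant complex is $\vv\gamma^-$ and product complex is $\vv\gamma^+$, with forward rate $\vv c^{\vv\gamma^+}$ and reverse rate $\vv c^{\vv\gamma^-}$, the balance condition \eqref{eq:detailed_balanced} at a point $\widehat{\xx}$ reads $\vv c^{\vv\gamma^+}\widehat{\xx}^{\vv\gamma^-}=\vv c^{\vv\gamma^-}\widehat{\xx}^{\vv\gamma^+}$. Setting $\widehat{\xx}=\vv c$ makes both sides equal to $\vv c^{\vv\gamma^+}\vv c^{\vv\gamma^-}$, so the condition holds simultaneously for all $\vv\gamma\in\Lambda$. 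Since $\vv c\in\rrpp^m$, this is a positive detailed-balanced steady state, and hence $G_{\Lambda,\vv c}$ is a detailed-balanced mass-action system. This step carries the real content of the theorem: the rate constants $\vv c^{\vv\gamma^+}$ and $\vv c^{\vv\gamma^-}$ in \Cref{const:DB_system} are chosen precisely so that $\vv c$ balances every reaction.

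For the second claim, I would first compute the stoichiometric subspace. Each reaction $\vv\gamma$ contributes the reaction vector $\vv\gamma^+-\vv\gamma^-=\vv\gamma$, so $S=\spn_\rr\{\vv\gamma:\vv\gamma\in\Lambda\}=\ker(\mm A)$ by the spanning hypothesis, whence $S^\perp=\row(\mm A)$. Applying \Cref{thm:DB_properties}\ref{thm-it:DB-parametrization} with the detailed-balanced steady state $\widehat{\xx}=\vv c$ then shows that the set of positive steady states is $\{\xx\in\rrpp^m:\log\xx-\log\vv c\in\row(\mm A)\}$. Taking logarithms of the parametrization $x_j=c_j\vv t^{\vv a_j}$ in \eqref{eq:define-xac} shows that $\log\xx-\log\vv c$ sweeps out exactly $\row(\mm A)$ as $\vv t$ ranges over $\rrpp^d$, so this set coincides with $\Xac$, as required.

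I do not expect a serious obstacle; the argument is short once the balanced state $\vv c$ is identified. The only points needing mild care are bookkeeping issues in checking that $G_\Lambda$ is a genuine reaction network in the sense of \Cref{defn:reaction-network}: one should assume (or arrange) that $\vv 0\notin\Lambda$, to avoid a self-loop from $\vv\gamma^-=\vv\gamma^+$, and confirm that the rate assignment is unambiguous when $\Lambda$ happens to contain an antipodal pair $\pm\vv\gamma$ (it is, since the forward rate of $-\vv\gamma$ equals the reverse rate of $\vv\gamma$). Neither affects the core computation.
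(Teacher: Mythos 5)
Your proof is correct, but it takes a different route from the paper's. The paper never singles out an explicit balanced point: it observes that the detailed-balance conditions \eqref{eq:detailed_balanced} for $G_{\Lambda,\vv c}$ are \emph{literally} the binomial equations $\vv{c}^{\vv{\gamma}^+}\xx^{\vv{\gamma}^-}=\vv{c}^{\vv{\gamma}^-}\xx^{\vv{\gamma}^+}$, $\vv\gamma\in\Lambda$, so by the implicit description \eqref{eq:xac-implicit} the set $\mc{D}_{G,\vv k}$ of detailed-balanced states \emph{is} $\Xac$; nonemptiness of $\Xac$ then gives detailed balance, and \Cref{thm:DB_properties}\ref{thm-it:DB-syst} ($\mc{E}_{G,\vv k}=\mc{D}_{G,\vv k}$) finishes the proof. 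You instead exhibit the single balanced point $\widehat{\xx}=\vv c$ (a correct computation, and indeed the design principle behind the rate constants in \Cref{const:DB_system}), then invoke \Cref{thm:DB_properties}\ref{thm-it:DB-parametrization} to get $\mc{E}_{G,\vv k}=\{\xx\in\rrpp^m:\log\xx-\log\vv c\in\row(\mm A)\}$, and identify this set with $\Xac$ by logarithmizing the parametrization \eqref{eq:define-xac}; each of these steps is sound, including the computation $S=\spn_\rr\Lambda=\ker(\mm A)$, hence $S^\perp=\row(\mm A)$. The trade-off: the paper's argument identifies \emph{all} detailed-balanced states with $\Xac$ in one stroke, using part \ref{thm-it:DB-syst} and the binomial description, whereas yours needs only one explicit point and the log-affine (rather than binomial) characterization of $\Xac$, using part \ref{thm-it:DB-parametrization} — arguably closer in spirit to Birch's Theorem and more transparent about why the construction's rate constants are what they are. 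Your side remarks on bookkeeping (excluding $\vv 0$ from $\Lambda$ to avoid a self-loop, and consistency of the rate assignment for an antipodal pair $\pm\vv\gamma$) are correct and address a point the paper leaves implicit.
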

\begin{proof}  
    The network $G_\Lambda$ is reversible by construction, and the defining condition \eqref{eq:detailed_balanced} for detailed-balanced steady states is   $\vv{c}^{\vv{\gamma}^+}\xx^{\vv{\gamma}^-}=\vv{c}^{\vv{\gamma}^-}\xx^{\vv{\gamma}^+}$ for all $\vv{\gamma}\in\Lambda$. Hence, the  the set of detailed-balanced steady states is $\Xac$ by \eqref{eq:xac-implicit}. Since $\Xac$ is nonempty, it follows that $G_{\Lambda,\vv{c}}$ is detailed-balanced, and $\Xac$ is the set of all positive steady states by \Cref{thm:DB_properties}\ref{thm-it:DB-syst}.
\end{proof}

\begin{rmk}\label{rmk:sufficient_statistics}
By construction, the rows of $\mm A$ form a spanning set for $S^\perp$, and hence, they describe conservation laws (i.e., we can take $\mm W=\mm A$ in the notation from \Cref{sec:RN-intro}), with $\mm{A}\vv{u}$ being the total amounts. 
In statistics, these total amounts are commonly referred to as the \emph{sufficient statistics} of the observed data, where the adjective \emph{sufficient} comes from the fact that to compute the MLE for a given $\vv{u}$, it is enough to know the value of $\mm{A}\vv{u}$.
\end{rmk}

An important goal of \Cref{sec:choosing_spanning_set} will be to study how various dynamical properties of the resulting mass-action system depend on the choice of the spanning set $\Lambda$, with particular focus on vector space bases and Markov bases.
With the following example, we highlight how one can obtain the same MLE but with different reaction networks and dynamics based on different choices of $\Lambda$.

\begin{ex}
\label{ex:dependent}
Consider the log-linear model of \Cref{ex:log_linear_model}, given by 
\[\mm A=\begin{pmatrix}4&2&3&1\\ 0&2&1&3\end{pmatrix}.\]
\begin{enumerate}[label={\normalfont(\alph*)}, leftmargin = 0.7cm]
    \item Consider the spanning set 
$\Lambda_1=\{(1,1,-2,0)^\top, (0,2,-1,-1)^\top\}$. The mass-action system $G_{\Lambda_1}$ given by \Cref{const:DB_system} is \eqref{crn:running_ex-Lambda1} of \Cref{ex:network}, where all the rate constants are $1$ since $\vv c = \mathbbm{1}$, and so $\vv{c}^{\vv \gamma^+} =\vv{c}^{\vv \gamma^-}  = 1$.
\item\label{it:markov} In contrast, the Markov basis 
$\Lambda_\mathrm{mb} = \{(1,1,-2,0)^\top$, $(0,2,-1,-1)^\top$, $(1,-1,-1,1)^\top$, $(2,0,-3,1)^\top\}$  generates the mass-action system $G_{\Lambda_\mathrm{mb}}$ where all rate constants are $1$:
\begin{equation}
\label{crn:running_ex-Markov}
    \vcenter{\hbox{\begin{tikzpicture}
            \begin{scope}[shift={(0,0)}]
            \node (1) at (0,0) [left] {$2\,\cf{X}_2$}; 
            \node (2) at (1.25,0) [right] {$\cf{X}_3 + \cf{X}_4$};
            \draw[fwdrxn, transform canvas={yshift=2pt}] (1)--(2) ;
            \draw[fwdrxn, transform canvas={yshift=-2pt}] (2)--(1) ;
            \end{scope}
            %%%
            \begin{scope}[shift={(5,0)}]
            \node (1) at (0,0) [left] {$2\,\cf{X}_3$}; 
            \node (2) at (1.25,0) [right] {$\cf{X}_1 + \cf{X}_2$};
            \draw[fwdrxn, transform canvas={yshift=2pt}] (1)--(2) ;
            \draw[fwdrxn, transform canvas={yshift=-2pt}] (2)--(1) ;
            \end{scope}
            %%%
            \begin{scope}[shift={(0,-0.7)}]
            \node (1) at (0,0) [left] {$\cf{X}_2 + \cf{X}_3$}; 
            \node (2) at (1.25,0) [right] {$\cf{X}_1 + \cf{X}_4$};
            \draw[fwdrxn, transform canvas={yshift=2pt}] (1)--(2) ;
            \draw[fwdrxn, transform canvas={yshift=-2pt}] (2)--(1) ;
            \end{scope}
            %%%
            \begin{scope}[shift={(5,-0.7)}]
            \node (1) at (0,0) [left] {$3\,\cf{X}_3$}; 
            \node (2) at (1.25,0) [right] {$2\,\cf{X}_1 + \cf{X}_4$.};
            \draw[fwdrxn, transform canvas={yshift=2pt}] (1)--(2) ;
            \draw[fwdrxn, transform canvas={yshift=-2pt}] (2)--(1) ;
            \end{scope}
        \end{tikzpicture}}}
    \end{equation}
The top two reactions make up $G_{\Lambda_1}$, so the right-hand side of the associated ODEs of $G_{\Lambda_\mathrm{mb}}$ can be written as the sum
\eq{ 
\begin{array}{rcll}
    \dot{x}_1 &= & \phantom{-2x_2^2 + 2x_3x_3 +2} x_3^2 - \phantom{2} x_1 x_2 &  {}+ x_2x_3 - x_1 x_4 + 2x_3^3 - 2x_1^2x_4 \\
    \dot x_2 &= & -2x_2^2 + 2x_3x_4 + \phantom{2} x_3^2 - \phantom{2} x_1 x_2 & {}-x_2x_3 + x_1 x_4  \\ 
    \dot x_3 &= & \phantom{-2} x_2^2 - \phantom{2} x_3x_4 - 2x_3^2 + 2x_1x_2 & {}-x_2x_3 + x_1 x_4 - 3x_3^3 + 3x_1^2x_4 \\ 
    \dot x_4 &= & \phantom{-2} x_3^2 - \phantom{2} x_3x_4 & {}+ x_2x_3 - x_1x_4 + \hphantom{2} x_3^3 - \hphantom{2} x_1^2x_4,  \\ 
    & & \raisebox{8pt}{$\underbrace{\hspace{4.75cm}}_{\vv f(\vv x)}$} & \raisebox{8pt}{$\underbrace{\hspace{4.75cm}}_{\vv g(\vv x)}$}
\end{array} 
}
where $\dot{\vv x} = \vv f(\vv x)$ is the ODEs associated to $G_{\Lambda_1}$. 
\item Consider the vector space basis 
$\Lambda_2 = \{(-4,0,6,-2)^\top, \: (3,3,-6,0)^\top\}$. Using this as a spanning set in \Cref{const:DB_system}, gives the following mass-action system where all rate constants are $1$:
\begin{equation*}
    \vcenter{\hbox{\begin{tikzpicture}
            \begin{scope}[shift={(0,0)}]
            \node (1) at (0,0) [left] {$4\, \cf{X}_1 + 2\,\cf{X}_4$} ; 
            \node (2) at (1.75,0) [] {$6\, \cf{X}_3$};
            \node (3) at (3.5,0) [right] {$3\,\cf{X}_1+ 3\, \cf{X}_2$.};
            \draw[fwdrxn, transform canvas={yshift=2pt}] (1)--(2) ;
            \draw[fwdrxn, transform canvas={yshift=-2pt}] (2)--(1) ;
            \draw[fwdrxn, transform canvas={yshift=2pt}] (2)--(3) ;
            \draw[fwdrxn, transform canvas={yshift=-2pt}] (3)--(2) ;
            \end{scope}
        \end{tikzpicture}}}
    \end{equation*}
This is associated to yet again a different system of ODEs: 
    $\dot x_1 =  - 4x_1^4x_4^2  + 7 x_3^6 -3 x_1^3 x_2^3$, 
    $\dot x_2 = 3 x_3^6 -3x_1^3 x_2^3 $, 
    $\dot x_3 = 6x_1^4 x_4^2 - 12 x_3^6 +  6 x_1^3 x_2^3  $, 
    $\dot x_4 = - 2x_1^4x_4^2 + 2x_3^6  $. 
\end{enumerate}
We emphasize that despite the fact that these three networks have different transient dynamics, the sets of positive steady states for all these systems are precisely $\Xa$ in \eqref{eq:log_linear_model-Xac} by \Cref{thm:realization_of_positive_toric_varieties}.
\end{ex}

\begin{rmk}
\Cref{const:DB_system} is not the only approach to constructing a network with $\Xac$ as its positive steady states. For instance, one can decompose each vector $\vv \gamma$ into any difference of nonnegative vectors (not just $\vv\gamma^+$ and $\vv{\gamma}^-$).
Continuing with \Cref{ex:dependent}, we could use $\Lambda_3 = \Lambda_2 \cup \{\vv \mu \} $, where 
$\vv \mu = (4,0,0,2) - (3,3,0,0)$. 
This decomposition gives rise to the network
    \begin{equation*}
    \vcenter{\hbox{\begin{tikzpicture}[scale=1.15]
            \begin{scope}[shift={(0,0)}]
            \node (1) at (0,0) [left] {$4\,\cf{X}_1 + 2\, \cf{X}_4$}; 
            \node (2) at (1.5,0) [right] {$3\, \cf{X}_1 + 3\, \cf{X}_2$\rlap{\,.}}; 
            \node (3) at (0.75,-1.5) {\,$6\,\cf{X}_3$} ;
            \node (3r) at (0.85,-1.35) {};
            \node (3l) at (0.65,-1.35) {}; 
            \draw[fwdrxn, transform canvas={yshift=2pt}] (1)--(2) ;
            \draw[fwdrxn, transform canvas={yshift=-2pt}] (2)--(1) ;
            \draw[fwdrxn, transform canvas={xshift=1.5pt, yshift=-1.5pt}] (2.south west)--(3r);
            \draw[fwdrxn, transform canvas={xshift=-1.5pt, yshift=1.5pt}] (3r)--(2.south west) ;
            \draw[fwdrxn, transform canvas={xshift=-1.55pt, yshift=-1.45pt}] (3l)--(1.south east) ;
            \draw[fwdrxn, transform canvas={xshift=1.55pt, yshift=1.45pt}] (1.south east)--(3l) ;
            \end{scope}
        \end{tikzpicture}}}
    \end{equation*}
\end{rmk}

The following is one of our  main results, which states that the MLEs over the model $\mc{M}_{\mm A,\vv c}$ are realized as steady states of the mass-action systems from \Cref{const:DB_system}.

\begin{thm}
\label{thm:main}
    Let $\mc{M}_{\mm A,\vv c}$ be the log-affine model defined by $\mm{A}\in\zz^{d\times m}$ with $\mathbbm{1}\in\row(\mm{A})$, and $\vv c \in \rrpp^m$. Let $\vv{u}\in\rrp^m\setminus\{\vv{0}\}$, and $\bar{\vv u} = \vv u /u_+$, and suppose  $(\bar{\vv{u}}+\ker(\mm{A}))_{>0}$ is nonempty. Let $\Lambda \subseteq \ker_\zz(\mm A)$ be a finite spanning set for $\ker(\mm{A})$, and let $G_{\Lambda, \vv c}$ be the mass-action system in \Cref{const:DB_system}. 
        Then the MLE $\widehat{\vv p}$ of $\vv{u}$ in $\mc{M}_{\mm A,\vv c}$ is the unique steady state of $G_{\Lambda,\vv{c}}$ in the positive stoichiometric compatibility class $(\bar{\vv u}+S)_{>0}$. Furthermore, $\widehat{\vv p}$ is asymptotically stable.
\end{thm}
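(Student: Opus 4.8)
The plan is to reduce everything to two results already in hand—the realization of $\Xac$ as the positive steady-state set of $G_{\Lambda,\vv c}$ (\Cref{thm:realization_of_positive_toric_varieties}) and the structure theory of detailed-balanced systems (\Cref{thm:DB_properties})—and then to match the resulting unique steady state against the MLE characterization from Birch's Theorem (\Cref{thm:Birch}). The whole argument is essentially a dictionary translation between the reaction-network and algebraic-statistics sides, exactly as foreshadowed in the remark following \Cref{thm:DB_properties}.

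First I would pin down the stoichiometric subspace. By \Cref{const:DB_system} the reaction vectors are precisely the elements of $\Lambda$, so $S=\spn_\rr\Lambda=\ker(\mm A)$ since $\Lambda$ spans $\ker(\mm A)$; consequently $S^\perp=\row(\mm A)$ and we may take $\mm W=\mm A$ as in \Cref{rmk:sufficient_statistics}. With this identification, the hypothesis that $(\bar{\vv u}+\ker(\mm A))_{>0}$ is nonempty is at once the statement that the positive stoichiometric compatibility class $(\bar{\vv u}+S)_{>0}$ is nonempty and the hypothesis required to apply Birch's Theorem.

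Next I invoke \Cref{thm:realization_of_positive_toric_varieties}, which gives that $G_{\Lambda,\vv c}$ is detailed-balanced with set of positive steady states equal to $\Xac$. By \Cref{thm:DB_properties}\ref{thm-it:DB-Birch}, the class $(\bar{\vv u}+S)_{>0}$ then contains exactly one steady state $\widehat{\vv x}$, and it lies in $\Xac$. It remains to identify $\widehat{\vv x}$ with the MLE $\widehat{\vv p}$. Since $\widehat{\vv x}\in\Xac\subseteq\closedXac$, and since $\widehat{\vv x}-\bar{\vv u}\in S=\ker(\mm A)$ yields $\mm A\widehat{\vv x}=\mm A\bar{\vv u}$, the point $\widehat{\vv x}$ satisfies both defining conditions of Birch's Theorem; by the uniqueness asserted there, $\widehat{\vv x}=\widehat{\vv p}$. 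Asymptotic stability is then immediate from \Cref{thm:DB_properties}\ref{thm-it:DB-stability}.

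I do not expect a genuine obstacle here beyond careful bookkeeping: the single substantive verification is that the detailed-balanced steady state meets Birch's two conditions, which rests on the inclusion $\Xac\subseteq\closedXac$ and the identity $S=\ker(\mm A)$. The interest of the statement is conceptual rather than technical—it is precisely the point at which the two uniqueness results, Birch's Theorem and \Cref{thm:DB_properties}\ref{thm-it:DB-Birch}, are made to coincide, so that the MLE and the unique positive detailed-balanced steady state are literally the same point.
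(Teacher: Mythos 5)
Your proposal is correct and follows essentially the same route as the paper's proof: invoke \Cref{thm:realization_of_positive_toric_varieties} for detailed balance and the identification of the positive steady states with $\Xac$, use $S=\ker(\mm A)$ (the paper cites \Cref{rmk:sufficient_statistics}) to match stoichiometric compatibility classes with Birch's fibers, conclude equality with the MLE via the uniqueness in \Cref{thm:Birch}, and get stability from \Cref{thm:DB_properties}\ref{thm-it:DB-stability}. Your write-up merely makes explicit the step the paper compresses, namely that the unique positive steady state satisfies both of Birch's conditions (via $\Xac\subseteq\closedXac$ and $\mm A\widehat{\vv x}=\mm A\bar{\vv u}$) and is therefore the MLE.
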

\begin{proof}
By \Cref{thm:realization_of_positive_toric_varieties}, the system $G_{\Lambda,\vv{c}}$ is detailed-balanced, and the positive steady states are given by $\Xac$. The positive stoichiometric compatibility class of $\bar{\vv{u}}$ is $(\bar{\vv{u}}+\ker(\mm{A}))_{>0}$ by \Cref{rmk:sufficient_statistics}, so the unique steady state in this class is precisely the MLE $\widehat{\vv p}$ by \Cref{thm:Birch}. Asymptotic stability follows by \Cref{thm:DB_properties}\ref{thm-it:DB-stability}.
\end{proof}

\begin{ex}[The Hardy--Weinberg model revisited]
\label{ex:trajectory_binomial_2_theta}
Consider the model from \Cref{ex:hardy_weinberg}. Using the Markov basis
$\Lambda=\{(1,-2,1)\}$, we obtain precisely the mass-action system that appeared in \Cref{ex-DYN-stoichiometric_compatibility_class-a}. For a given initial state $\vv{x}_0=(u_1,u_2,u_3)$, the initial value problem 
has the closed form solution
\begin{align*}
    x_1(t) &= \frac{  (2u_1+u_2)^2 - (u_2^2-4u_1u_3)e^{-4t} }{4(u_1+u_2+u_3)^2}\\ 
    x_2(t) &= \frac{  (u_2+2u_3)(2u_1+u_2) + (u_2^2-4u_1u_3)e^{-4t}}{2(u_1+u_2+u_3)^2} \\ 
     x_3(t) &= \frac{  (u_2+2u_3)^2 - (u_2^2-4u_1u_3)e^{-4t} }{4(u_1+u_2+u_3)^2}.
\end{align*}
Letting $t \to \infty$, we recover the rational expression for the MLE from \eqref{eq:hardy_weinberg_mle}:
\[ \lim_{t\to\infty} \xx(t) =   \left( \frac{(2u_1+u_2)^2}{4(u_1+u_2+u_3)^2} , \frac{(u_2+2u_3)(2u_1+u_2)}{2(u_1+u_2+u_3)^2}  , \frac{(u_2+2u_3)^2}{4(u_1+u_2+u_3)^2} \right) = \widehat{\vv p}.\] 
Note also that if $\vv u \in \mathcal{M}_{\mm A, \vv c}$, then $\vv x(t)=\bar{\vv u}$ for all $t \geq 0$, i.e., it remains constant as the MLE.
\end{ex}

We close the section with two remarks about possible variations of \Cref{const:DB_system}. 

\begin{rmk}\label{rmk:quadratic}
Networks that appear in biological systems are typically \emph{bimolecular}, in the sense that the coefficients of each reactant complex sum up to at most 2. The maximum-likelihood-estimating networks of \Cref{const:DB_system} are not necessarily bimolecular, which could pose a challenge in their biochemical implementation. However, we note that they can always be modified into bimolecular networks after introducing additional auxiliary species and reactions, in such a way that the steady state concentrations in the original species still is the desired MLE~\cite{Kerner1981-recasting,HERNANDEZBERMEJO199531,Wilhelm2000, Plesa2023}. 
\end{rmk}

\begin{rmk}\label{rmk:augmented_system}
This section has been focused on proving that for a given log-affine model $\mc M_{\mm A,\vv c}$, one can compute the MLE $\widehat{\vv{p}}$ using the mass-action system $G_{\Lambda, \vv c}$ in \Cref{const:DB_system}.
In some applications, it could be useful to also obtain the parameters $\vv{t}=(t_1,\ldots,t_d)$ that give rise to $\widehat{\vv{p}}$ under the parametrization $\varphi_{\mm A,\vv c}$ from \eqref{eq:parametrization_of_log_affine_set}. In order to do this, \cite{Gopalkrishnan2016} proposed an extension to $G_{\Lambda, \vv c}$; the following is based on the same idea. 

Choose a set $J\subseteq[m]$ of linearly independent columns of $\mm A$ that span the column space. For each $i \in [d]$, add a new species $\cf{Z}_i$; moreover for each $j \in J$, add the reactions
\begin{equation*}
    \begin{tikzpicture}
        \node (1a) at (0,0) [left] {$\ds \sum_{i=1}^d a_{ij} \cf{Z}_i$};
        \node (2a) at (1,0) [right] {$\cf{0}$,\hspace{1.2em} and };
        \draw[fwdrxn, transform canvas={yshift=0pt}] (1a) -- (2a) node[above, midway] {\ratecnst{{$c_j$}}};

        \node (1b) at (3.75,0) [left] {$\cf{X}_j$}; 
        \node (2b) at (4.75,0) [right] {$\ds \cf{X}_j + a_{ij} \cf{Z}_i$};
        \draw[fwdrxn, transform canvas={yshift=0pt}] (1b) -- (2b) node[above, midway] {\ratecnst{{$1$}}};
        
        \node at (11.5,0) [left] {for each $i$ with $a_{ij} > 0$.};
    \end{tikzpicture}
\end{equation*}
Notice that the added reactions do not affect any of the equations $\dot{x}_j$; hence the subsystem for $\dot{x}_j$ still converges to the MLE $\widehat{\vv x} = \widehat{\vv p}$. The added variables $z_i$ evolve according to 
$\dot{\vv z}(t) = \sum_{j \in J}  \left( x_j(t) - c_j \vv z(t)^{\vv a_j} \right)\vv{a}_j$,
which at steady state satisfies $\widehat{x}_j = c_j \widehat{\vv z}^{\vv a_j}$ for all $j \in J$. 
Hence, at steady state, we recover the parametrization $\widehat{\vv x} = \vv c \circ \vv t^{\mm A}$ for the subsystem on $\xx$. An analysis similar to that of \cite[Theorem 6]{Gopalkrishnan2016} shows that the augmented system is globally stable if and only if $G_{\Lambda,\vv{c}}$ is globally stable with respect to the appropriate stoichiometric compatibility class. 
\end{rmk}

%%%%%%%%%%%%%%%%%%%%%%%%%%%%%%%%%%%%%%%%%%%%%%%%%%%%%%%%%%%%%%%%%%%%%%%%%%%%%%%%%%%%%%%%%%%%%%%%%%
\section{Impact of the spanning set on the dynamics}
\label{sec:choosing_spanning_set}

A key step in \Cref{const:DB_system} is choosing a finite spanning set $\Lambda\subseteq\ker_{\zz}(\mm{A})$ for $\ker(\mm{A})$. In this section, we discuss how this choice influences the dynamical properties of the mass-action system $G_{\Lambda, \vv c}$. We will focus on three properties: global stability (\Cref{sec:boundary}), deficiency (\Cref{sec:DZ}), and the rate of convergence (\Cref{sec:slow-mfld}). The two main options we will consider are the following:
\begin{itemize}
    \item If $\Lambda$ is chosen to be a \emph{Markov basis}, we show in \Cref{sec:boundary} that asymptotic stability can be strengthened to global stability, and in \Cref{sec:slow-mfld}, we give examples where it ensures fast and uniform convergence.
    \item If $\Lambda$ is a \emph{vector space basis} for $\ker(\mm{A})$, the cardinality is minimal, which means the fewest possible reactions to be engineered in a synthetic biology context. In \Cref{sec:DZ}, we show that this implies that the deficiency is zero.
\end{itemize}

\subsection{Global stability and boundary steady states}
\label{sec:boundary}

An important question about our maximum-likelihood-estimating networks from the point of view of computational reliability is whether we have \emph{global stability}, in the sense that the trajectory of the initial value problem in \Cref{thm:main} converges to the MLE $\widehat{\vv{p}}$ for any vector of counts $\vv{u}\in\rrp^m$. 

As pointed out in \Cref{sec:RN-intro}, the Global Attractor Conjecture states that we have global convergence for strictly positive vectors $\vv{u}\in\rrpp^m$, but is only proven for special classes of networks, and does not directly address the case when $\vv{u}\in\partial\rrp^m$. 

In this section, we will focus on the property of \emph{lacking relevant boundary steady states} (discussed at the end of \Cref{sec:RN-intro}) as a sufficient condition for global convergence. For $G_{\Lambda,\vv{c}}$, this is equivalent to the system
\[
\sum_{\vv\gamma\in \Lambda} \big( \vv c^{\vv \gamma^+} \xx^{\vv \gamma^-} - \vv c^{\vv \gamma^-} \xx^{\vv \gamma^+}) \cdot \vv\gamma
=\vv 0,
\qquad 
\mm{A}\vv x=\mm{A}\vv u\]
having no solution in $\partial\rrp^m$ for any $\vv u\in\rrpp^m$.

\begin{thm}
\label{thm:global_convergence_to_the_MLE_if_no_relevant_boundary_steady_states}
Let $\mm{A}\in\zz^{d\times m}$ with $\mathbbm{1}\in\row(\mm{A})$, and $\vv{c}\in\rrpp^m$, and suppose that $\Lambda\subseteq\ker_{\zz}(\mm{A})$ is a finite spanning set of $\ker(\mm{A})$ such that $G_{\Lambda,\vv{c}}$ lacks relevant boundary steady states.  Then, for any vector of observed data $\vv{u}\in\rrp^m$ such that $(\bar{\vv{u}}+\ker(\mm{A}))_{>0}$ is nonempty, the trajectory $\vv{x}(t;\bar{\vv{u}})$ converges to the MLE of $\vv{u}$ in the model $\mc{M}_{\mm{A},\vv{c}}$.
\end{thm}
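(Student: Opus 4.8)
The plan is to assemble the statement directly from the detailed-balanced structure established in \Cref{thm:realization_of_positive_toric_varieties} together with the boundary convergence results developed at the end of \Cref{sec:RN-intro}. First I would record that, by \Cref{thm:realization_of_positive_toric_varieties}, the mass-action system $G_{\Lambda,\vv{c}}$ is detailed-balanced, and that its stoichiometric subspace is $S=\ker(\mm{A})$ with $S^\perp$ spanned by the rows of $\mm{A}$ (cf.\ \Cref{rmk:sufficient_statistics}). Combined with the standing hypothesis that $G_{\Lambda,\vv{c}}$ lacks relevant boundary steady states, this places us squarely within the hypotheses of the convergence propositions.

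The one point that requires care is that the initial state $\bar{\vv{u}}$ need not lie in the open orthant: if some count $u_i$ vanishes, then $\bar{\vv{u}}\in\partial\rrp^m$, so \Cref{prop:no_boundary_steady_states_imply_global_convergence} does not apply verbatim. I would therefore invoke its boundary strengthening, \Cref{prop:global_convergence_also_from_boundary_if_no_boundary_steady_states}. Since $(\bar{\vv{u}}+\ker(\mm{A}))_{>0}=(\bar{\vv{u}}+S)_{>0}$ is nonempty by assumption, that proposition guarantees that the trajectory $\vv{x}(t;\bar{\vv{u}})$ converges to the unique positive steady state $\widehat{\vv{x}}$ in $\mc{E}_{G,\vv{k}}\cap(\bar{\vv{u}}+S)_{>0}$.

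Finally I would identify this limit with the maximum likelihood estimate. By \Cref{thm:main}, the unique positive steady state of $G_{\Lambda,\vv{c}}$ in the positive stoichiometric compatibility class $(\bar{\vv{u}}+S)_{>0}$ is precisely the MLE $\widehat{\vv{p}}$ of $\vv{u}$ in $\mc{M}_{\mm{A},\vv{c}}$; equivalently, this follows from Birch's Theorem (\Cref{thm:Birch}) applied with $\mm{A}$ in the role of the conservation-law matrix. I do not expect a genuine obstacle here: the theorem is essentially a bookkeeping consequence of the results already established, and the one substantive input---ruling out relevant boundary steady states---is supplied directly as a hypothesis. The sole subtlety, as noted, is to use the boundary version of the convergence proposition so as to cover data vectors with zero entries, rather than the strictly-positive version.
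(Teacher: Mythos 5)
Your proposal is correct and follows exactly the paper's own route: the paper proves this theorem as a direct consequence of \Cref{prop:global_convergence_also_from_boundary_if_no_boundary_steady_states} and \Cref{thm:main}, which is precisely the combination you assemble. Your added care about $\bar{\vv{u}}$ possibly lying in $\partial\rrp^m$ (hence needing the boundary version of the convergence proposition rather than \Cref{prop:no_boundary_steady_states_imply_global_convergence}) is exactly the subtlety those cited results are designed to handle.
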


\begin{proof}
This is a direct consequence of \Cref{prop:global_convergence_also_from_boundary_if_no_boundary_steady_states} and \Cref{thm:main}.
\end{proof}

It is natural to ask whether the spanning set $\Lambda$ can be chosen in a way that rules out relevant boundary states, and thereby ensures global convergence. We prove that choosing $\Lambda$ to be a Markov basis achieves this.

\begin{thm}\label{thm:markov_basis_implies_no_relevant_boundary_steady_states}
If $\Lambda\subseteq\ker_{\zz}(\mm{A})$ is a Markov basis for $\mm{A}$, then the mass-action system $G_{\Lambda, \vv c}$ in \Cref{const:DB_system} lacks relevant boundary steady states. In particular, $G_{\Lambda, \vv c}$ satisfies the global convergence property of \Cref{thm:global_convergence_to_the_MLE_if_no_relevant_boundary_steady_states}.
\end{thm}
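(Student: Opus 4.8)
The plan is to argue by contradiction, playing the uniqueness half of Birch's Theorem against the strict positivity of the Birch point. Recall that the stoichiometric subspace of $G_{\Lambda,\vv c}$ is $S=\ker(\mm A)$, so a \emph{relevant} boundary steady state would be a point $\widehat{\xx}\in\partial\rrp^m$ that is a steady state and for which the class $(\widehat{\xx}+\ker(\mm A))_{>0}$ is nonempty. I would fix such a $\widehat{\xx}$ together with a witness $\xx_0\in(\widehat{\xx}+\ker(\mm A))\cap\rrpp^m$, and aim to show this forces $\widehat{\xx}$ to coincide (after normalization) with a strictly positive point, which is absurd.

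The first step is to locate all nonnegative steady states inside $\closedXac$. Since $G_{\Lambda,\vv c}$ is detailed-balanced by \Cref{thm:realization_of_positive_toric_varieties}, \Cref{prop:all_nonnegative_steady_states_are_DB} says every nonnegative steady state is a nonnegative detailed-balanced state, i.e. satisfies $\vv c^{\vv\gamma^+}\widehat{\xx}^{\vv\gamma^-}=\vv c^{\vv\gamma^-}\widehat{\xx}^{\vv\gamma^+}$ for all $\vv\gamma\in\Lambda$. Because $\Lambda$ is a Markov basis, \Cref{lem:markov_basis_cuts_out_closedXac} guarantees that these binomial relations are exactly the equations cutting out $\closedXac$ in $\rrp^m$, so $\widehat{\xx}\in\closedXac$. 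This is precisely where the Markov basis hypothesis enters: for a merely vector-space-spanning $\Lambda$ the same binomials can cut out a strictly larger set (cf.\ \Cref{ex:log_linear_model}), leaving room for spurious boundary steady states and explaining the inaccuracy in \cite{Gopalkrishnan2016}.

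It then remains to contradict the positivity of the Birch point. I would set $\vv u:=\widehat{\xx}$ and $\sigma:=\sum_i\widehat x_i$, so $\bar{\vv u}=\widehat{\xx}/\sigma$; note $\sigma>0$ because $\mathbbm{1}\in\row(\mm A)$ forces $\sum_i\widehat x_i=\sum_i (x_0)_i>0$. Since $\mm A\xx_0=\mm A\widehat{\xx}$ with $\xx_0\in\rrpp^m$, the point $\xx_0/\sigma$ lies in $(\bar{\vv u}+\ker(\mm A))\cap\rrpp^m$, so the hypothesis of \Cref{thm:Birch} holds and produces a \emph{unique} $\widehat{\vv p}\in\closedXac$ solving $\mm A\widehat{\vv p}=\mm A\bar{\vv u}$, which is moreover strictly positive. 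The key auxiliary fact is that $\closedXac$ is invariant under positive scaling: writing $\mathbbm{1}=\vv\lambda^\top\mm A$ (possible since $\mathbbm{1}\in\row(\mm A)$), the reparametrization $t_j\mapsto t_j\,s^{\lambda_j}$ shows $s\,\Xac=\Xac$ for every $s>0$, and hence $s\,\closedXac=\closedXac$ since scaling is a homeomorphism of $\rrp^m$. Consequently $\widehat{\xx}/\sigma\in\closedXac$ and $\mm A(\widehat{\xx}/\sigma)=\mm A\bar{\vv u}$, so uniqueness gives $\widehat{\xx}/\sigma=\widehat{\vv p}\in\rrpp^m$, contradicting $\widehat{\xx}/\sigma\in\partial\rrp^m$. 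Therefore no relevant boundary steady state exists, and the global convergence claim is then immediate from \Cref{thm:global_convergence_to_the_MLE_if_no_relevant_boundary_steady_states}.

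The step I expect to require the most care is the scaling and normalization bookkeeping. Birch's Theorem is phrased for normalized data $\bar{\vv u}\in\Delta_{m-1}$, whereas the boundary steady state $\widehat{\xx}$ and its compatibility class need not sum to one, so the argument genuinely relies on the homogeneity coming from $\mathbbm{1}\in\row(\mm A)$ to transport $\widehat{\xx}$ onto the simplex before comparing it with $\widehat{\vv p}$. Everything else is a clean chaining of \Cref{prop:all_nonnegative_steady_states_are_DB}, \Cref{lem:markov_basis_cuts_out_closedXac}, and \Cref{thm:Birch}.
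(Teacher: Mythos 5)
Your proof is correct and follows essentially the same route as the paper's: both identify the nonnegative steady states of $G_{\Lambda,\vv c}$ with $\closedXac$ by chaining \Cref{prop:all_nonnegative_steady_states_are_DB} and \Cref{lem:markov_basis_cuts_out_closedXac} (exactly where the Markov basis hypothesis enters), and then use the uniqueness and strict positivity of the Birch point from \Cref{thm:Birch} to exclude boundary steady states from any class meeting $\rrpp^m$. The only difference is presentational: you argue by contradiction and make explicit the normalization onto the simplex via the scale-invariance $s\,\closedXac=\closedXac$ coming from $\mathbbm{1}\in\row(\mm A)$, a bookkeeping point that the paper's proof passes over silently.
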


\begin{proof}
By \Cref{prop:all_nonnegative_steady_states_are_DB}, the set of nonnegative steady states of $G_{\Lambda,\vv{c}}$ is given by
    \[\mc{E}_{G_{\Lambda,\vv{c}}}^{\geq 0}=\Big\{\vv{x}\in\rrp^m:\vv{c}^{\vv{\gamma}^-}\vv{x}^{\vv{\gamma}^+}=\vv{c}^{\vv{\gamma}^+}\vv{x}^{\vv\gamma^-}\text{ for all }\vv\gamma\in\Lambda\Big\}=\closedXac,\]
    where the second equality follows from \Cref{lem:markov_basis_cuts_out_closedXac}. The lack of relevant boundary steady states now follows from the fact that, for any $\vv{u}\in\rrpp^m$, the intersection $\closedXac\cap(\vv{u}+\ker(\mm{A}))$ consists of a strictly positive point (and therefore has empty intersection with $\partial\rrp^m$) according to \Cref{thm:Birch}.
\end{proof}

\begin{rmk}
A similar statement appears in \cite[Lemma~2]{Gopalkrishnan2016} for $\vv c = \mathbbm{1}$, with a different proof relying on siphon theory. The statement of \cite{Gopalkrishnan2016} uses the weaker assumption that $\Lambda$ is a $\zz$-basis for the abelian group $\ker_{\zz}(\mm{A})$, rather than a Markov basis, but as shown by \Cref{ex:boundary_steady_states}\ref{it:counterexample} below, this is \emph{not} sufficient to rule out relevant boundary steady states. Indeed, the key step in the proof given in \cite{Gopalkrishnan2016} is to show that the so-called \emph{associated ideal} $J \coloneqq \langle \vv{x}^\vv{y}-\vv{x}^{\vv y'} :  \vv y\to \vv y'\in E \rangle$ of the network is prime, which proves structural persistence by \cite[Theorems~4.1 and~5.2]{gopalkrishnan2011catalysis}, and thereby rules out relevant boundary steady states. For $G_{\Lambda}$, we have $J=\langle \vv{x}^{\vv \gamma^+}-\vv{x}^{\vv \gamma^-} :  \vv\gamma\in\Lambda \rangle$.  When $\Lambda$ is a Markov basis, primeness of $J$ follows from the fact that toric varieties are irreducible, but when $\Lambda$ is not a Markov basis, $J$ is always non-prime.
\end{rmk}

If $\Lambda\subseteq\ker(\mm{A})$ is not a Markov basis, there might be additional solutions to the binomial equations $\vv{c}^{\vv{\gamma}^-}\vv{x}^{\vv{\gamma}^+}=\vv{c}^{\vv{\gamma}^+}\vv{x}^{\vv\gamma^-}$ for $\vv\gamma\in\Lambda$ 
compared to the binomial system induced by a Markov basis for $\mm{A}$. However, it is only in some cases that these additional solutions constitute relevant boundary steady states. We illustrate this, as well as the concept of siphons and structural persistence, in the next example. Note that for a network $G_{\Lambda}$, a siphon corresponds to a set $Z\subseteq [m]$ such that for every $\vv{\gamma}\in\Lambda$ and $i\in Z$ with $\gamma_i>0$, there is a $j\in Z$ such that $\gamma_j<0$. Structural persistence means that for every minimal siphon $Z$, there is some $\vv a\in\row(\mm{A})\cap\rrp^m$ with $\supp(\vv a)=Z$.

\begin{ex}\label{ex:boundary_steady_states}
Consider again the log-linear model $\mathcal{M}_{\mm A}$ from \Cref{ex:log_linear_model,ex:dependent}, where 
\[\mm{A}=\begin{pmatrix}4&2&3&1\\ 0&2&1&3\end{pmatrix}.\]
\begin{enumerate}[label=(\alph*), leftmargin = 0.7cm]
\item\label{it:counterexample} Recall the non-Markov basis $\Lambda_1 =\{(1,1,-2,0)^\top, (0,2,-1,-1)^\top\}$, which is a $\zz$-basis for $\ker_{\zz}(\mm{A})$. From this, we obtain the mass-action system in \eqref{crn:running_ex-Lambda1}, where all the rate constants are $1$. This system has relevant boundary steady states, e.g., $\widehat{\xx} = (0.9,0,0,0.1)^\top$; see red box in \Cref{fig:ex:boundary_steady_states-vsp}.
In this case, $\{2,3\}$ is a minimal siphon, but there are no vectors $\vv{w}\in\row(\mm{A})$ with $\supp(\vv{w})=\{2,3\}$, so the siphon criterion is inconclusive.

\item For the Markov basis 
$\Lambda_\mathrm{mb} =\{(1,1,-2,0)^\top$, $(0,2,-1,-1)^\top$, $(2,0,-3,1)^\top$, $(1,-1,-1,1)^\top\}$, we obtained the mass-action system \eqref{crn:running_ex-Markov} with four pairs of reversible reactions. By \Cref{thm:markov_basis_implies_no_relevant_boundary_steady_states}, $G_{\Lambda}$ lacks relevant boundary steady states, and we are guaranteed to have global convergence. This is illustrated in \Cref{fig:ex:boundary_steady_states-Markov}. 

\item\label{it:vector_space_basis_with_persistence} Finally, for the non-Markov basis $\Lambda_4=\{ (1,-3,0,2)^\top, (-1,-1,2,0)^\top\}$, which is also a $\zz$-basis for $\ker_{\zz}(\mm{A})$, we obtain a structurally persistent network; the minimal siphons are $\{1,2,3\}$ and $\{2,3,4\}$, which are the supports of the vectors
$(12,4,8,0)^\top,(0,2,1,3)^\top\in\row(\mm{A})$. Hence, it follows from \Cref{prop:no_boundary_steady_states_imply_global_convergence,prop:siphon_critierion} that we have global convergence.
\end{enumerate}
\end{ex}

\begin{figure}[t]
\centering
    \begin{subfigure}[t]{0.48\textwidth}
    \centering 
        \includegraphics[width=2.5in]{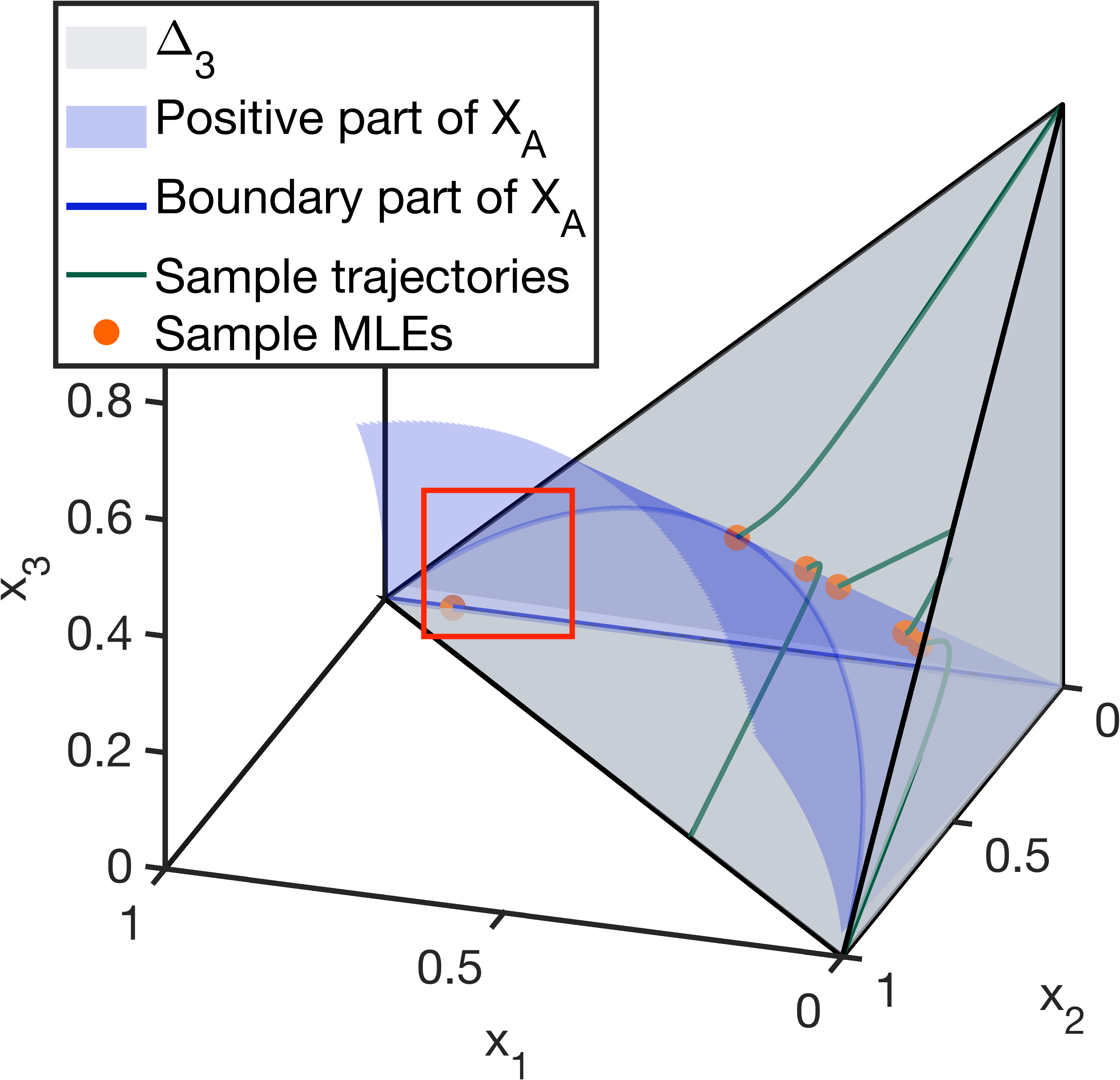}
        \caption{$\Lambda_1$}
        \label{fig:ex:boundary_steady_states-vsp}
    \end{subfigure}
    \begin{subfigure}[t]{0.48\textwidth}
    \centering 
        \includegraphics[width=2.5in]{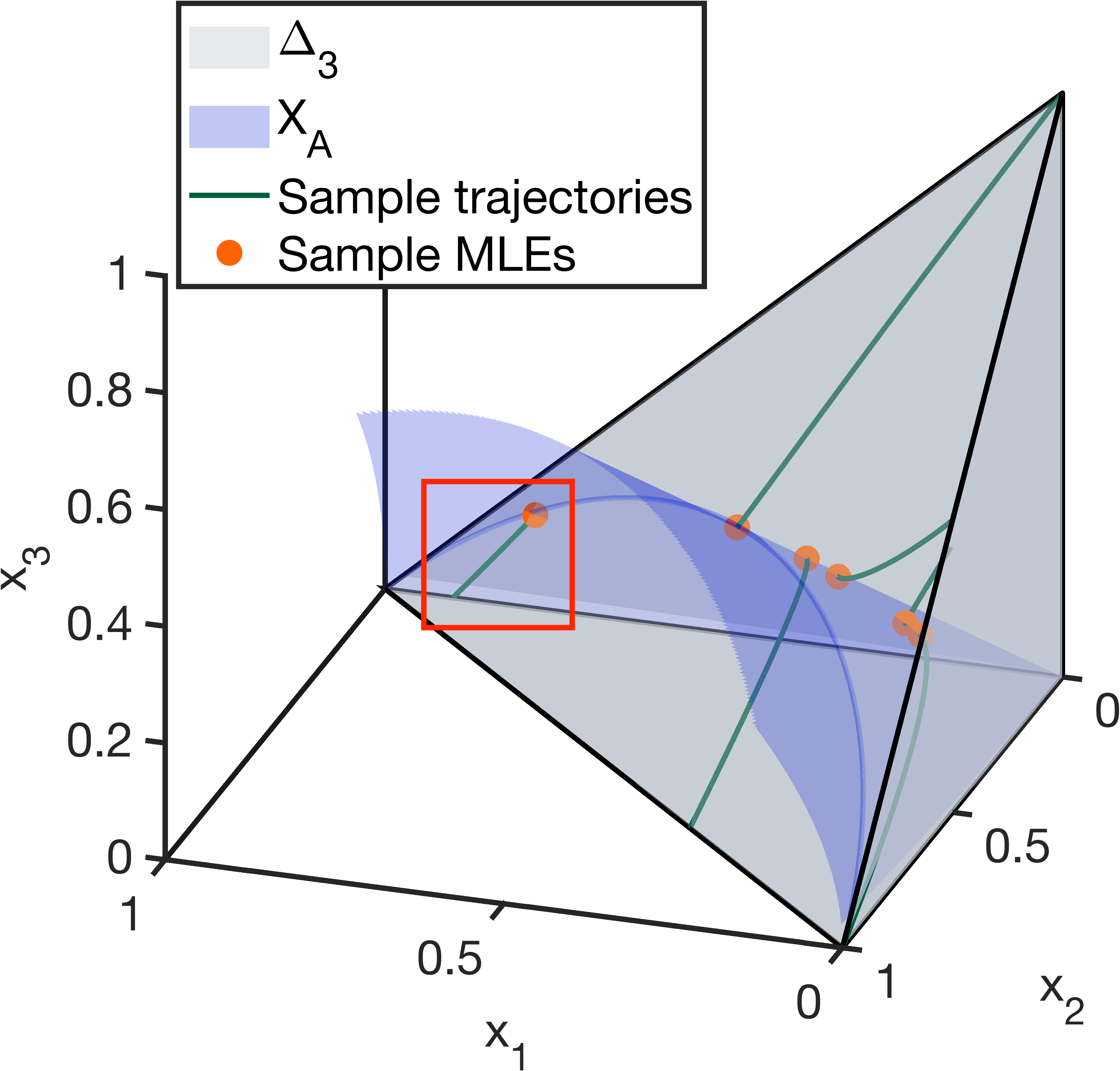}
        \caption{$\Lambda_\mrm{mb}$}
        \label{fig:ex:boundary_steady_states-Markov}
    \end{subfigure}
    \caption{Illustration of \Cref{ex:boundary_steady_states}, showing trajectories and steady states for several initial conditions under the canonical projection to  $(x_1,x_2,x_3)$. Red~boxes: The observed distribution $\bar{\vv u} = (0.9,0,0,0.1)^\top$ is a boundary steady state for system given by $\Lambda_1$, while the system given by $\Lambda_\mrm{mb}$ evolves from $\bar{\vv u}$ to the MLE.}
    \label{fig:ex:boundary_steady_states}
\end{figure}

\subsection{Deficiency}
\label{sec:DZ}

A challenge in the chemical implementation of $G_{\Lambda,\vv{c}}$ is fine-tuning the rate constants to the values prescribed by \Cref{const:DB_system}. Because of this, an important question is to what extent the dynamical properties of $G_{\Lambda,\vv{c}}$ are preserved under perturbations to the rate constants. 

In this section, we will focus on the particularly well-behaved scenario of deficiency zero. The \demph{deficiency} of a (weakly) reversible reaction network $G=(V,E)$ is $\delta := |V| - \ell - \dim(S)$, where $\ell$ is the number of connected components and $S$ is the stoichiometric subspace. By \cite{CDSS09}, $\delta$ is the codimension of the subset of the rate constants for which $G$ is complex-balanced (and hence satisfies analogous properties as those given in \Cref{thm:DB_properties,prop:global_convergence_also_from_boundary_if_no_boundary_steady_states}). In particular, if $\delta=0$, the Deficiency Zero Theorem \cite{Horn1972} states that the network is complex-balanced for \emph{all} values of the rate constants. For a more in-depth introduction to deficiency theory, we refer to  \cite{Fei19}.

\begin{thm}
\label{thm:deficiency-zero}
Let $\mm{A}\in\zz^{d\times m}$ and let $\Lambda\subseteq\ker_{\zz}(\mm{A})$ be a finite spanning set of $\ker(\mm{A})$. 
\begin{enumerate}[label=(\alph*)]
\item The deficiency of $G_{\Lambda}$ is $\delta=|\Lambda|-r-\dim(\ker(\mm{A}))$, where $|\Lambda|$ is the cardinality of $\Lambda$ and $r$ is the cycle rank (the number of independent cycles) in the underlying undirected graph of $G_{\Lambda}$.
\item If $\Lambda$ is a vector space basis for $\ker(\mm{A})$, then the undirected graph of $G_{\Lambda}$ is acyclic and $\delta=0$.
\end{enumerate}
\end{thm}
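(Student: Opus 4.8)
The plan is to compute the deficiency via the formula $\delta = |V| - \ell - \dim(S)$ from the definition, expressing each term in the notation of \Cref{const:DB_system}. First I would identify the three ingredients for the network $G_\Lambda$. The stoichiometric subspace is $S = \spn_\rr\{\vv\gamma^+ - \vv\gamma^- : \vv\gamma\in\Lambda\} = \spn_\rr\{\vv\gamma : \vv\gamma\in\Lambda\} = \ker(\mm A)$, since $\Lambda$ spans $\ker(\mm A)$ by hypothesis; hence $\dim(S) = \dim(\ker(\mm A))$. For the complexes and connected components, the key observation is the relationship between the vertex and component counts of the underlying \emph{undirected} graph and its cycle rank. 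Recall the standard graph-theoretic identity that for any undirected graph $H$ with vertex set $V$, edge set $E_H$, and $\ell$ connected components, the cycle rank (first Betti number) is $r = |E_H| - |V| + \ell$, equivalently $|V| - \ell = |E_H| - r$.

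The next step is to count edges. Each $\vv\gamma\in\Lambda$ contributes exactly one reversible reaction pair, which is a single undirected edge between the complexes $\vv\gamma^-$ and $\vv\gamma^+$ (I should note here that one ought to verify these are genuinely distinct complexes giving a simple graph, i.e. that $\vv\gamma \neq \vv 0$, which holds since $\vv 0 \notin \ker_\zz(\mm A)$ is not a meaningful spanning vector — or more carefully, one may assume $\Lambda$ contains no zero vectors, and that distinct elements of $\Lambda$ give distinct edges, so that $|E_H| = |\Lambda|$). Combining $|V| - \ell = |E_H| - r = |\Lambda| - r$ with $\dim(S) = \dim(\ker(\mm A))$ yields
\[
\delta = |V| - \ell - \dim(S) = (|\Lambda| - r) - \dim(\ker(\mm A)),
\]
which is exactly the claimed formula in part (a).

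For part (b), I would specialize to the case where $\Lambda$ is a vector space basis for $\ker(\mm A)$. The goal is to show the underlying undirected graph is acyclic, i.e. $r = 0$, from which $\delta = |\Lambda| - 0 - \dim(\ker(\mm A)) = 0$ follows immediately since a basis has cardinality $|\Lambda| = \dim(\ker(\mm A))$. To see acyclicity, the main idea is that a cycle in the undirected reaction graph would produce a nontrivial linear dependence among the edge vectors $\{\vv\gamma : \vv\gamma\in\Lambda\}$: traversing a cycle and summing the signed stoichiometric differences $\vv\gamma^+ - \vv\gamma^- = \vv\gamma$ (with appropriate signs according to orientation along the cycle) returns to the starting complex, hence the signed sum of the corresponding $\vv\gamma$'s is the zero vector. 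Since the $\vv\gamma$ are linearly independent (being a basis), no such nontrivial relation exists, so there is no cycle and $r=0$.

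I expect the main obstacle to be part (b): making rigorous the claim that a cycle in the undirected graph forces a linear dependence among the $\vv\gamma$. The subtlety is that the vertices (complexes) are points in $\zzp^m$ and the edges are labeled by the $\vv\gamma$, so one must carefully argue that walking around an undirected cycle and accumulating the vectors $\vv\gamma^+ - \vv\gamma^-$ (signed by whether each edge is traversed from $\vv\gamma^-$ to $\vv\gamma^+$ or the reverse) telescopes to $\vv 0$; this is a standard fact about the cycle space of a graph mapping into the stoichiometric subspace via the edge-to-reaction-vector map, but it needs to be stated cleanly. One clean way to phrase it is that the linear map sending each edge to its reaction vector $\vv\gamma$ sends the cycle space of the undirected graph into the kernel of a certain boundary map, and that linear independence of the $\vv\gamma$ forces this cycle space to be trivial. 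I would also double-check the edge-cases in the counting for part (a), namely that $\Lambda$ may legitimately contain repeated or sign-opposite vectors producing fewer distinct edges than $|\Lambda|$; if one allows this, the cleanest convention is to take the formula as relating $|\Lambda|$, the cycle rank $r$, and $\dim(\ker(\mm A))$ directly as stated, absorbing any such degeneracies into the value of $r$.
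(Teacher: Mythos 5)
Your proof of part (a) coincides with the paper's: both rest on the identity $|E|-r=|V|-\ell$ for the underlying undirected graph, together with $\dim S=\dim\ker(\mm A)$ and $|E|=|\Lambda|$. For part (b), however, you take a genuinely different route. The paper deduces (b) from (a) combined with the standard fact that the deficiency of a network is never negative: when $\Lambda$ is a basis, the formula in (a) reads $\delta=-r$, so $\delta\geq 0$ and $r\geq 0$ force $r=0$ and $\delta=0$ simultaneously, and acyclicity comes out as a corollary rather than being proved directly. You instead prove acyclicity first, via the cycle-space argument: traversing an undirected cycle telescopes the signed reaction vectors to $\vv 0$, yielding a nontrivial $\pm 1$ relation among distinct elements of $\Lambda$, contradicting linear independence; then $\delta=0$ follows from (a). Both arguments are correct. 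Yours is self-contained (it does not invoke $\delta\geq 0$, whose usual proof is essentially a general version of your telescoping argument), at the cost of the bookkeeping you flag; the paper's is shorter but leans on that external fact. Finally, the degeneracy you raise --- $\Lambda$ containing a sign-opposite pair $\{\vv\gamma,-\vv\gamma\}$, which collapses to a single reversible pair and makes $|E|<|\Lambda|$ --- is a genuine edge case in the statement of (a) that the paper's proof passes over silently; your suggestion to absorb it into $r$ amounts to adopting a multigraph convention, which is what makes the formula hold verbatim. For part (b) this issue cannot arise, since a basis cannot contain both $\vv\gamma$ and $-\vv\gamma$.
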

\begin{proof}
    Part (a) follows from the general graph-theoretic fact that the number of edges minus the cycle rank equals the number of vertices minus the number of connected components. Part (b) follows from part (a), combined with the fact that $\delta$ cannot be negative.
\end{proof}

Note that $G_{\Lambda}$ has deficiency zero precisely when the cycle rank equals the number of vectors in $|\Lambda|$ that are redundant for spanning $\ker(\mm{A})$.
When this property of having deficiency zero coincides with the property of lacking relevant boundary steady states discussed in \Cref{sec:boundary} (e.g., when $\Lambda$ is both a vector space basis and Markov basis), then the network displays very robust dynamics. 

\begin{cor}
\label{cor:dz_and_global_convergence}
Suppose that $\Lambda\subseteq\ker_{\zz}(\mm{A})$ is such that $\delta=0$ and $G_{\Lambda}$ lacks relevant boundary steady states for all rate constants. Then $G_{\Lambda}$ has a unique and globally stable positive steady state in each stoichiometric compatibility class for every choice of rate constants. 
\end{cor}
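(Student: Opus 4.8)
The plan is to combine the Deficiency Zero Theorem with the global-convergence machinery of \Cref{sec:RN-intro}, working with complex-balanced (rather than detailed-balanced) systems throughout. First I would observe that $G_\Lambda$ is reversible by construction, and in particular weakly reversible. Since $\delta=0$ by hypothesis, the Deficiency Zero Theorem \cite{Horn1972} guarantees that the mass-action system $(G_\Lambda,\vv k)$ is complex-balanced for \emph{every} choice of rate constants $\vv k$. This is the crucial step that removes any dependence on the specific detailed-balanced rate constants of \Cref{const:DB_system}: for arbitrary $\vv k$ the system need not be detailed-balanced, but deficiency zero still forces complex-balancedness.

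Next I would invoke the complex-balanced analogs of the stability results from \Cref{sec:RN-intro}. As recorded there, and in the discussion of \cite{CDSS09} at the start of \Cref{sec:DZ}, the conclusions of \Cref{thm:DB_properties}, \Cref{prop:no_boundary_steady_states_imply_global_convergence}, and \Cref{prop:global_convergence_also_from_boundary_if_no_boundary_steady_states} were originally established for complex-balanced systems \cite{HJ72,SiegelMacLean2000}. Thus, for each fixed $\vv k$, every positive stoichiometric compatibility class $(\vv x_0+S)_{>0}$ contains a unique positive steady state, and that steady state is asymptotically stable within the class.

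Finally, I would use the second hypothesis, that $G_\Lambda$ lacks relevant boundary steady states for all rate constants, to upgrade asymptotic stability to global stability. Fixing any $\vv k$ and any initial state $\vv x_0\in\rrp^m$ whose compatibility class meets the positive orthant, the complex-balanced version of \Cref{prop:global_convergence_also_from_boundary_if_no_boundary_steady_states} shows that the trajectory $\vv x(t;\vv x_0)$ converges to the unique positive steady state in $(\vv x_0+S)_{>0}$. Since $\vv k$ was arbitrary, this yields the claimed unique and globally stable positive steady state for every choice of rate constants.

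The main obstacle is a matter of bookkeeping rather than depth: one must apply the stability results in their full complex-balanced generality, since the explicitly detailed-balanced statements of \Cref{prop:no_boundary_steady_states_imply_global_convergence,prop:global_convergence_also_from_boundary_if_no_boundary_steady_states} do not by themselves cover arbitrary rate constants on $G_\Lambda$. Once complex-balancedness is secured via deficiency zero, the two hypotheses slot directly into the existing machinery and nothing new needs to be proved.
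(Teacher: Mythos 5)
Your proposal is correct and follows essentially the same route as the paper, whose proof consists precisely of invoking the complex-balanced analog of \Cref{prop:global_convergence_also_from_boundary_if_no_boundary_steady_states} together with the Deficiency Zero Theorem \cite{Horn1972} to secure complex-balancedness (and hence a unique, globally attracting positive steady state per compatibility class) for all rate constants. Your write-up merely spells out the bookkeeping that the paper leaves implicit.
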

\begin{proof}
This follows by the complex-balanced analog of \Cref{prop:global_convergence_also_from_boundary_if_no_boundary_steady_states}, combined with the Deficiency Zero Theorem \cite[Theorem~4A]{Horn1972}.
\end{proof}

\begin{ex}
Consider again the choices of spanning sets discussed in \Cref{ex:boundary_steady_states}. Using $\Lambda_1$ gives a network with $\delta=0$, $\Lambda_{\operatorname{mb}}$ gives a network with $\delta=1$, and $\Lambda_4$ gives a network with $\delta=0$ that satisfies the property in \Cref{cor:dz_and_global_convergence}.
\end{ex}

\subsection{Rate of convergence}
\label{sec:slow-mfld}
In the neighborhood of a hyperbolic equilibrium, the rate of convergence is dictated by the eigenvalue $\lambda_{\max}$ with the least negative real part for the Jacobian matrix~\cite{ChiconeODE}, assuming all its eigenvalues have negative real parts. Convergence along the eigendirection for $\lambda_{\max}$ is the slowest; thus the more negative $\mathrm{Re}(\lambda_{\max})$ is, the faster the system converges to equilibrium, with a rate of convergence $-\mathrm{Re}(\lambda_{\max})$. For detailed-balanced systems, we only consider its nonzero eigenvalues, since the detailed-balanced steady state is hyperbolic with respect to its stoichiometric compatibility class~\cite{Johnston2008}. 

In this section, we explore how the rate of convergence depends on the choice of the spanning set, using as two examples: the binary 4-cycle (\Cref{ex:4cycle}), and the independence model with design matrix $\mm A_\mathrm{ind}(10,10)$ (\Cref{ex:independence}). 
(Similar observations were made for $\mm A_\mathrm{ind}(r_1,r_2)$ with $3 \leq r_1 \leq 10$ and $2 \leq r_2 \leq 10$, and for scaling factors other than $\mathbbm{1}$.) Each of these models has the property that there is a unique Markov basis $\Lambda_\mathrm{mb}\subseteq\ker_{\zz}(\mm{A})$ of minimal cardinality; see the database \texttt{Markov-Bases.de} \cite{4ti2} and \cite[Proposition~1.2.2]{drton2008lectures}.

In our experiments, we first constructed a strictly increasing sequence of subsets
\begin{equation*}
    \Lambda_1\subsetneq\Lambda_2 \subsetneq\cdots \subsetneq 
    \Lambda_9\subsetneq \Lambda_{10}
\end{equation*}
where $\Lambda_1$ is a vector space basis for $\ker(\mm{A})$, and $\Lambda_{10}$ has the same cardinality as  $\Lambda_\mathrm{mb}$. Then for each $\Lambda_i$ and for a random choice of initial conditions $\vv{u} \in \Delta_{m-1}$, we approximated the positive steady state, i.e., the MLE, and the nonzero eigenvalues of the Jacobian matrix. Thus we estimated the rate of convergence $-\mathrm{Re}(\lambda_{\max})$ of the system $G_{\Lambda_i, \mathbbm{1}}$. We conducted two numerical experiments:  (i) where $\Lambda_{10}$ is the unique Markov basis of minimal cardinality, and (ii) where $\Lambda_{10}$ is a random spanning set%
	\footnote{
		Obtained by extending $\Lambda_1$ with linear combinations of vectors in $\Lambda_1$ whose coefficients were uniformly drawn from $\{-1,0,1\}$.
	}%
. 
Detailed numerical results and codes for the numerical experiments are available at 
\url{https://github.com/oskarhenriksson/estimating-birch-points-with-networks}.%
    \footnote{The code is based on Python 3.10.12, with packages NumPy 1.26.4~\cite{numpy}, Scipy 1.11.4~\cite{scipy}, and the reaction networks theory package Tellurium 2.2.10~\cite{ChoiMedleyKonigStockingEtAl2018}.}

\begin{figure}[h!bt]
\centering 
\begin{subfigure}[t]{0.48\textwidth}
\centering 
\includegraphics[width=0.95\textwidth]{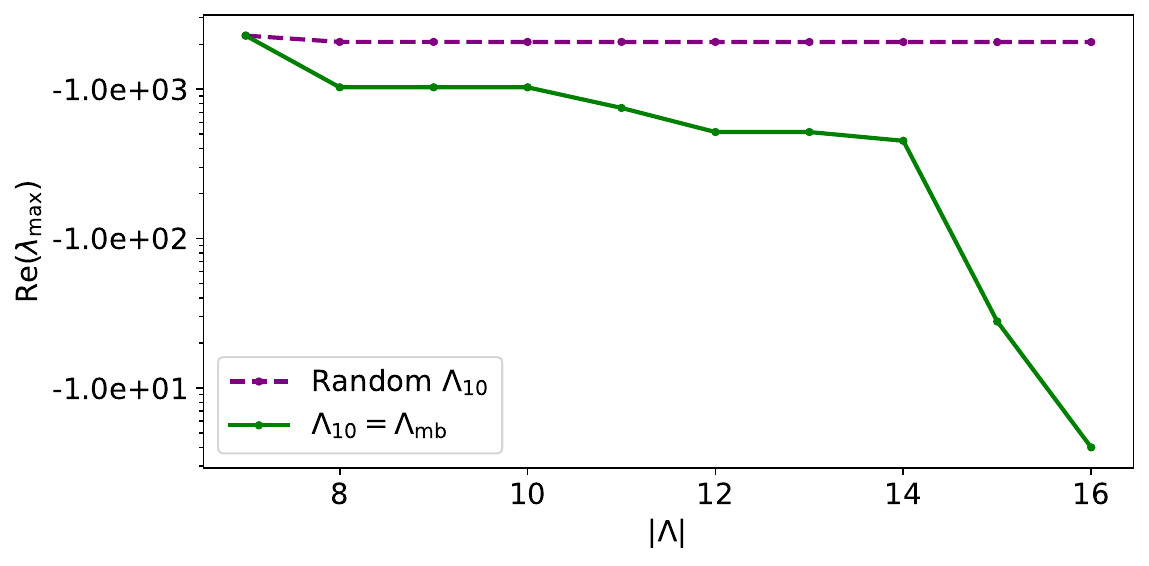}
    \caption{}
    \label{fig:RateCnvg-4cycle}
\end{subfigure} 
\begin{subfigure}[t]{0.48\textwidth}
\centering
\includegraphics[width=0.95\textwidth]{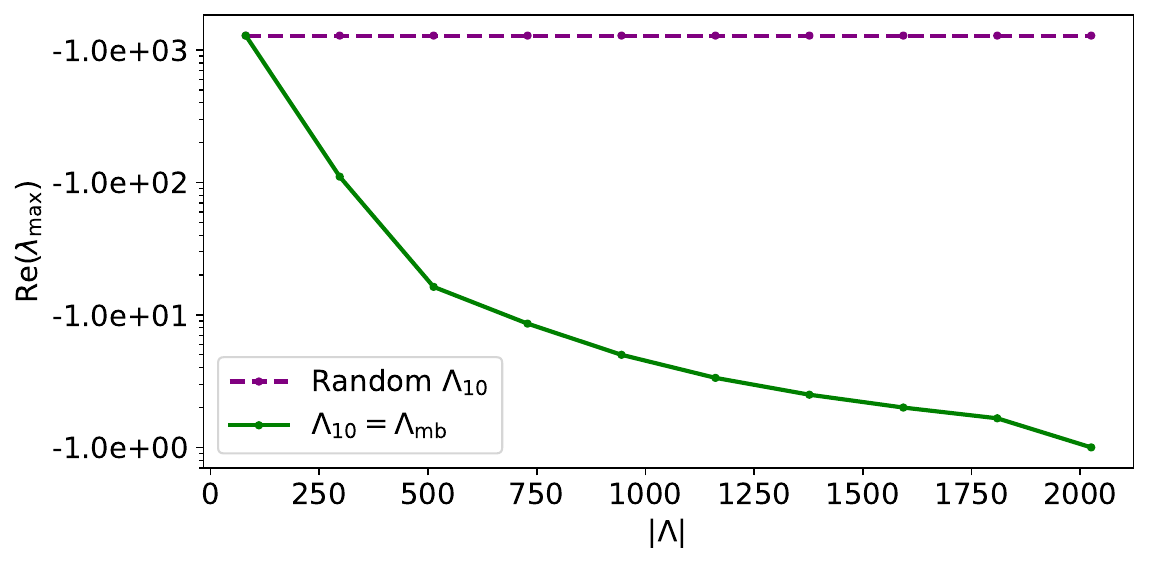}
    \caption{}
    \label{fig:RateCnvg-indep}
\end{subfigure} 
\caption{The least negative nonzero eigenvalues for systems constructed using increasing sequences of spanning sets $\Lambda_i$, which increase towards either (solid) the unique Markov basis $\Lambda_{10} = \Lambda_{\mathrm{mb}}$, or (dashed) a random spanning set. The models considered were (a)  the binary 4-cycle model (\Cref{ex:4cycle}), and (b) the independence model (\Cref{ex:independence}) with $r_1 = 10$ and $r_2 = 10$. }
\label{fig:rate_cnvg}
\end{figure}

\Cref{fig:rate_cnvg} is a plot of the real parts of the least negative nonzero eigenvalues $\mathrm{Re}(\lambda_{\max})$ for each $\Lambda_i$, where the results for the binary 4-cycles are shown in (a), and those for the independence model are shown in (b). Notice that for each model, for a fixed number of reactions, i.e., for fixed $|\Lambda_i| $, the rate of convergence is faster when $\Lambda_i$ is increasing towards a Markov basis, compared to when they came from a random spanning set. The numerical experiments suggest that one cannot speed up the rate of convergence by simply adding more vectors to the spanning set, unless the vectors are chosen wisely, e.g., building up towards a Markov basis. Understanding this interplay between algebraic geometry and dynamics is an interesting direction for future research.

When the spanning set $\Lambda_i$ increases towards the Markov basis, we observed in these two examples that not only is the least negative nonzero eigenvalue monotonically decreasing (i.e., increasing rate of convergence), but that \emph{all} eigenvalues tend to be more negative. \Cref{fig:all_evals-markov} shows all the nonzero eigenvalues for (a) the binary 4-cycle model and (b) independence model. The phenomenon is more pronounced in the case of the independence model (\Cref{fig:all_evals-indep}), where we observed that all the eigenvalues are $-1$ for the minimal Markov basis, i.e., the system converges to the MLE globally with rate of convergence $1$.

\begin{figure}[h!tb]
\centering 
\begin{subfigure}[t]{0.48\textwidth}
\centering 
\includegraphics[width=0.95\textwidth]{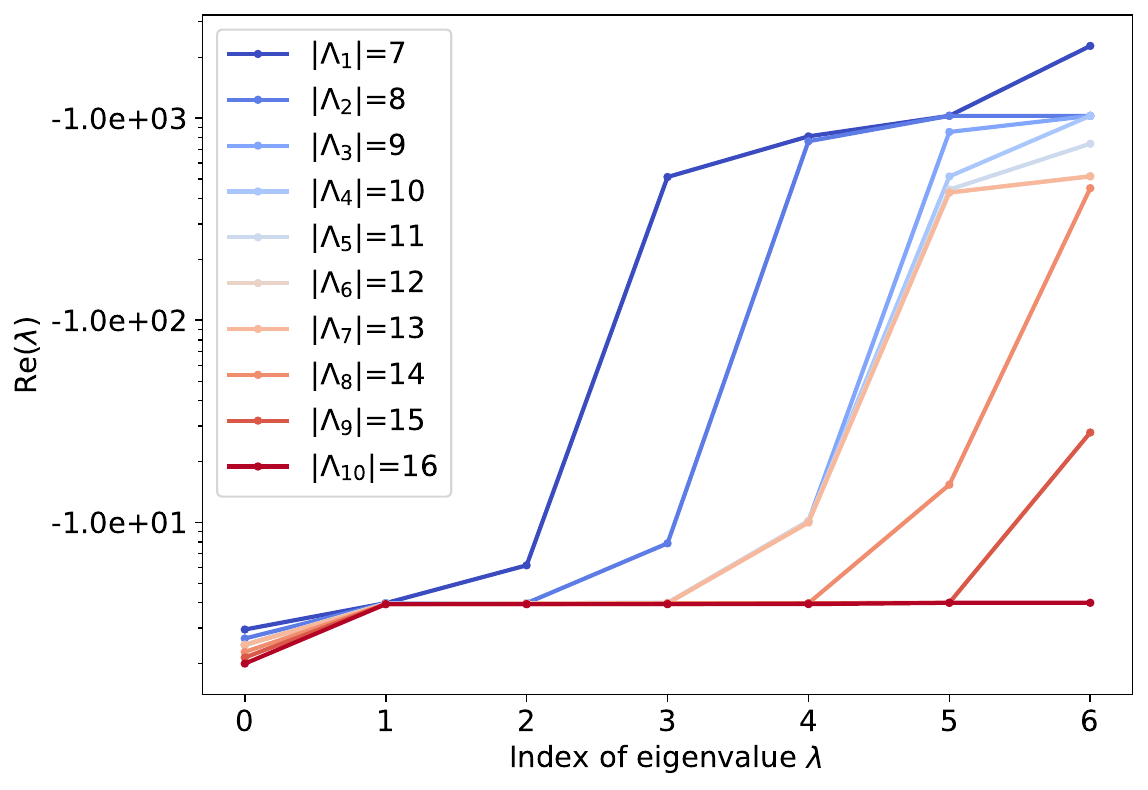}
    \caption{}
    \label{fig:all_evals-4cycle}
\end{subfigure} 
\begin{subfigure}[t]{0.48\textwidth}
\centering
\includegraphics[width=0.95\textwidth]{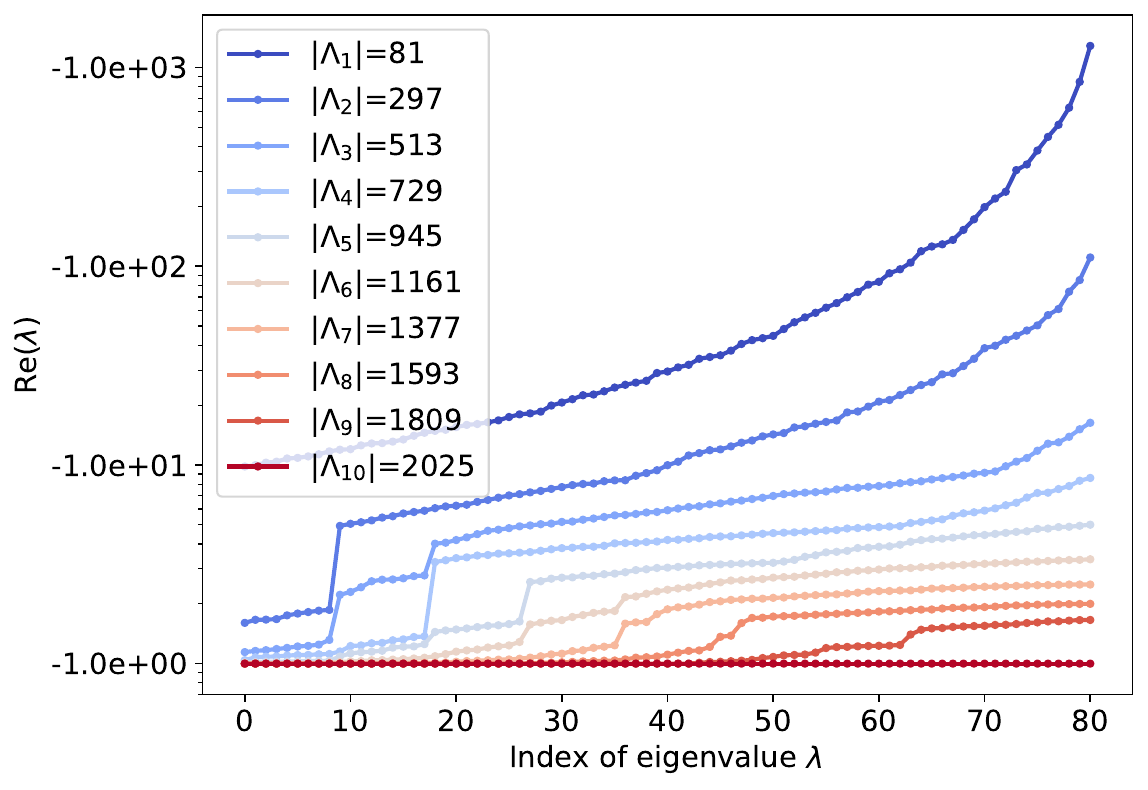}
    \caption{}
    \label{fig:all_evals-indep}
\end{subfigure} 
\caption{All nonzero eigenvalues of the systems constructed using $\Lambda_i$, which increases towards the Markov basis $\Lambda_{10}$ for (a)  the binary 4-cycle model (\Cref{ex:4cycle}), and (b) the independence model (\Cref{ex:independence}) with $r_1 = 10$ and $r_2 = 10$. As $\Lambda_i$ increases towards the Markov basis, all nonzero eigenvalues tend to become more negative. 
}
\label{fig:all_evals-markov}
\end{figure} 

We conclude this section by proving that the numerical observation concerning the independence model holds more generally:  the detailed-balanced system constructed using the unique minimal Markov basis always has global rate of convergence $1$. In fact, we give explicit analytical solutions for the independence model, which shows that the rate of convergence is \emph{globally} exponential with rate $1$, not just near the MLE. This is a highly desirable feature from a molecular computation point of view, since it means that the system will converge to the MLE with a rate that is independent of the input (see, e.g., the discussion in \cite{AndersonJoshi2024}). See also \Cref{ex:trajectory_binomial_2_theta} for another example of when a Markov basis gives rise to a global rate of convergence.  An interesting problem for future work is to investigate whether we can always find closed form formulas for models where the MLE depends rationally on the observed distribution.

\begin{prop}\label{prop:anlaytic_solution_for_independence_models_with_markov_basis}
Consider the initial value problem constructed by \Cref{const:DB_system} using the unique minimal Markov basis \eqref{eq:markov_indep} for the independence model with $r_1$ and $r_2$ states:
\begin{equation}\label{eq:IVP_independence_model}
\dot{p}_{ij}=\sum_{k\in [r_1]\setminus\{i\}}\sum_{\ell\in[r_2]\setminus\{j\}} (p_{i\ell}p_{kj}-p_{ij}p_{k\ell}),\quad p_{ij}(0)=\bar{u}_{ij},\quad (i,j)\in[r_1]\times [r_2].
\end{equation}
The unique solution to \eqref{eq:IVP_independence_model} is given explicitly by
\begin{equation}\label{eq:traj_indep}
    p_{ij}(t)=\frac{1}{u_{++}^2}\Big(u_{i+}u_{+j}+(u_{ij}u_{++}-u_{i+}u_{+j})e^{-t}\Big),
\end{equation}
where 
\[u_{+j}\coloneqq\sum_{i=1}^{r_1} u_{ij},\quad u_{i+}\coloneqq\sum_{j=1}^{r_2} u_{ij},\quad  \text{and}\quad u_{++}\coloneqq\sum_{i=1}^{r_1}\sum_{j=1}^{r_2} u_{ij} .\]
\end{prop}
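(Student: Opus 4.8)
The plan is to solve the initial value problem \eqref{eq:IVP_independence_model} outright rather than merely verifying \eqref{eq:traj_indep} after the fact, since the right-hand side collapses dramatically once it is organized in terms of marginals. Throughout I would write $p_{i+}(t)=\sum_j p_{ij}(t)$, $p_{+j}(t)=\sum_i p_{ij}(t)$, and $p_{++}(t)=\sum_{i,j}p_{ij}(t)$ for the running marginals of the trajectory. The first step is to rewrite the quadratic double sum in \eqref{eq:IVP_independence_model}. Splitting it into its two halves and using that $k$ ranges over $[r_1]\setminus\{i\}$ and $\ell$ over $[r_2]\setminus\{j\}$, the two pieces factor as
\[\sum_{k\neq i}\sum_{\ell\neq j} p_{i\ell}p_{kj} = (p_{i+}-p_{ij})(p_{+j}-p_{ij}), \qquad \sum_{k\neq i}\sum_{\ell\neq j} p_{ij}p_{k\ell} = p_{ij}\,(p_{++}-p_{i+}-p_{+j}+p_{ij}),\]
where the second identity is just inclusion--exclusion. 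Expanding and cancelling, every quadratic term (the $p_{ij}^2$ terms together with the cross terms $p_{ij}p_{i+}$ and $p_{ij}p_{+j}$) drops out, leaving the clean identity
\[\dot p_{ij} = p_{i+}\,p_{+j} - p_{ij}\,p_{++}.\]

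With this form in hand, the second step is to extract conservation laws. Summing over all $(i,j)$ gives $\dot p_{++} = p_{++}^2 - p_{++}^2 = 0$, and since $\bar{\vv u}\in\Delta_{r_1r_2-1}$ we have $p_{++}(0)=1$, so $p_{++}(t)\equiv 1$. Summing only over $j$ (respectively over $i$) yields $\dot p_{i+} = p_{i+}p_{++}-p_{++}p_{i+}=0$ and likewise $\dot p_{+j}=0$, so every row and column marginal is a first integral frozen at $p_{i+}(t)\equiv u_{i+}/u_{++}$ and $p_{+j}(t)\equiv u_{+j}/u_{++}$; these are precisely the sufficient statistics of \Cref{rmk:sufficient_statistics} specialized to the independence model. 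Substituting these constants and $p_{++}\equiv 1$ back into the simplified equation decouples the system into $r_1 r_2$ independent scalar linear ODEs
\[\dot p_{ij} = \frac{u_{i+}u_{+j}}{u_{++}^2} - p_{ij},\]
each with the unique solution $p_{ij}(t) = \tfrac{u_{i+}u_{+j}}{u_{++}^2} + \big(\bar u_{ij} - \tfrac{u_{i+}u_{+j}}{u_{++}^2}\big)e^{-t}$ satisfying $p_{ij}(0)=\bar u_{ij}=u_{ij}/u_{++}$, and rearranging reproduces \eqref{eq:traj_indep}.

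Finally, uniqueness is immediate: the vector field in \eqref{eq:IVP_independence_model} is polynomial, hence locally Lipschitz, so Picard--Lindel\"of guarantees a unique maximal solution, and the explicit formula is defined and bounded for all $t\ge 0$. I expect the only genuine content to lie in the first step, the algebraic collapse of the double sum to $p_{i+}p_{+j}-p_{ij}p_{++}$; this is the observation that converts an apparently coupled quadratic system into conserved marginals plus decoupled linear decay, and everything afterward is routine. As a sanity check, one can rewrite the answer as $p_{ij}(t) = (1-e^{-t})\,\widehat p_{ij} + e^{-t}\,\bar u_{ij}$, a convex combination of the MLE \eqref{eq:MLE_independence_model} and the data $\bar{\vv u}$, which simultaneously confirms that the trajectory remains in $\Delta_{r_1r_2-1}$ and converges to the MLE at the uniform exponential rate $1$.
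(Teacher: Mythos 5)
Your proof is correct, and it takes a genuinely different route from the paper's. The paper \emph{verifies} the candidate formula: it substitutes \eqref{eq:traj_indep} into the right-hand side of \eqref{eq:IVP_independence_model} and checks, using the inclusion--exclusion identity $\sum_{k\neq i}\sum_{\ell\neq j} u_{k\ell} = u_{++}-u_{i+}-u_{+j}+u_{ij}$ and the identity $u_{i+}u_{+j}-u_{ij}u_{++}=\sum_{k\neq i}\sum_{\ell\neq j}(u_{i\ell}u_{kj}-u_{ij}u_{k\ell})$, that the derivatives $\dot p_{ij}$ are recovered, together with the initial condition and the limit as $t\to\infty$. You instead \emph{solve} the system forward: the collapse $\dot p_{ij}=p_{i+}p_{+j}-p_{ij}p_{++}$ (essentially the paper's second identity, applied to the trajectory rather than to the data), then the observation that $p_{++}\equiv 1$ and that all row and column marginals are conserved---these are exactly the conservation laws given by the rows of the design matrix, i.e., the sufficient statistics of \Cref{rmk:sufficient_statistics}---after which the system decouples into the scalar linear ODEs $\dot p_{ij} = \widehat p_{ij} - p_{ij}$, solved elementarily, with uniqueness from Picard--Lindel\"of (a point the paper leaves implicit). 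What your derivation buys: it does not require knowing the closed form in advance, it explains structurally why the convergence rate is uniformly $1$ (linear decay toward the MLE frozen by the conservation laws), and your closing rewrite $p_{ij}(t)=(1-e^{-t})\,\widehat p_{ij}+e^{-t}\,\bar u_{ij}$ makes simplex-invariance and global convergence transparent. What the paper's verification buys is brevity. Both arguments ultimately rest on the same two algebraic identities; yours organizes them more instructively.
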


\begin{proof}
Observe that 
\[ \lim_{t\to\infty} p_{ij}(t)=\frac{u_{i+}u_{+j}}{u_{++}^2}, \] 
which is consistent with the known MLE \eqref{eq:MLE_independence_model} for independence models. Furthermore, we have $p_{ij}(0)=\frac{u_{ij}}{u_{++}} = \bar{u}_{ij}$, so it suffices to show that substituting \eqref{eq:traj_indep} into the right-hand side of the ODEs \eqref{eq:IVP_independence_model}, recovers the derivatives $\dot{p}_{ij}$, which are given by
\[\dot{p}_{ij}=\frac{(u_{i+}u_{+j}-u_{ij}u_{++})}{u_{++}^2}e^{-t}.\]
The proof proceeds by direct computation, involving
the identities
\[ \sum_{k\in [r_1]\setminus\{i\}}\sum_{\ell\in[r_2]\setminus\{j\}} u_{kl} = u_{++}-u_{i+}-u_{+j}+u_{ij} \]
and
\[u_{i+}u_{+j}-u_{ij}u_{++}=\sum_{k\in [r_1]\setminus\{i\}}\sum_{\ell\in[r_2]\setminus\{j\}} (u_{i\ell}u_{kj}-u_{ij}u_{k\ell}).\qedhere\]
\end{proof}

\section{Conclusion and outlook}
\label{sec:discussion}

In this work, we describe a method that, for a given statistical log-affine model $\mc{M}_{\mm A,\vv c}$, returns a mass-action system $G_{\Lambda,\vv c}$ with the property that the MLE $\widehat{\vv p}$ of $\mc{M}_{\mm A,\vv c}$  with respect to an observed distribution $\bar{\vv u}$ is the unique positive steady state of $G_{\Lambda,\vv c}$ in the stoichiometric compatibility class of $\bar{\vv{u}}$.

Throughout the paper, we see examples of how $G_{\Lambda,\vv{c}}$ is dynamically more well-behaved if $\Lambda$ is chosen as a Markov basis for the design matrix $\mm{A}$. 
In particular, for a Markov basis, there are no relevant boundary steady states, which guarantees global convergence to the MLE.
Taking $\Lambda$ to be a $\mathbb{Z}$-basis is not enough to rule out the existence of relevant boundary steady states, as shown in~\Cref{ex:boundary_steady_states}. Furthermore, the numerical experiments presented in \Cref{sec:slow-mfld} suggest that convergence to the MLE is faster when using a Markov basis than a vector space basis. For example, for independence models, the minimal Markov bases always give global rates of convergence $1$, while with a vector space basis, the system has components whose time-scales are as high as  $10^3$, which could lead to unreliable computational performance.

From a synthetic biology point of view, these observations suggest an interesting trade-off between desirable dynamics and feasibility of chemically implementing a given network, since a Markov basis is typically much larger than a vector space basis for $\ker(\mm{A})$. 
For instance, in the independence model case, the smallest Markov basis for  $\mm{A}_\mathrm{ind}(r_1,r_2)$ has $\binom{r_1}{2}\binom{r_2}{2}$ elements, while the dimension of the kernel is $(r_1-1)(r_2-1)$. An interesting direction for future research is therefore to find systematic ways of choosing spanning sets $\Lambda$ that are not Markov bases, but still give rise to well-behaved dynamics, similarly to what was achieved in \Cref{ex:boundary_steady_states}\ref{it:vector_space_basis_with_persistence}.

%%%%%%%%%%%%%%%%%%%%%%%%%%%%%%%%%%%%%%%%%%%%%%%%%%%%%%%%%%%%%%%%%%%%%%%%%%%%%%%%%%%%%%%%%%%%%%%%%%
\subsection*{Acknowledgments}
We thank Maize Curiel for helpful discussions in the early phases of the project, Abhishek Deshpande for sharing his expertise on siphon theory, Lucian Smith for support with the \texttt{Tellurium} package~\cite{ChoiMedleyKonigStockingEtAl2018}, and Sebastian Walcher for discussions related to rate of convergence and time-scale separations. We also thank Elisenda Feliu and Janike Oldekop for feedback on earlier versions of the manuscript, as well as the anonymous referees for several insightful suggestions.

The research by JIR was partially supported by the Alfred P. Sloan Foundation. OH was partially funded by Novo Nordisk (project NNF20OC0065582), as well as the European Union (Grant Agreement no.~101044561, POSALG). Views and opinions expressed are those of the authors only and do not necessarily reflect those of the European Union or European Research Council (ERC). Neither the European Union nor ERC can be held responsible for them. Part of this research was performed while the authors were visiting the Institute for Mathematical and Statistical Innovation (IMSI), which is supported by the National Science Foundation (Grant No.~DMS-1929348). 

%%%%%%%%%%%%%%%%%%%%%%%%%%%%%%%%%%%%%%%%%%%%%%%%%%%%%%%%%%%%%%%%%%%%%%%%%%%%%%%%%%%%%%%%%%%%%%%%%%

\bibliographystyle{amsalpha} 
\bibliography{main}

\end{document}